\documentclass[11pt,reqno]{amsart}

\pdfoutput = 1

\usepackage[english]{babel}
\usepackage{amsmath,amssymb,amsthm}
\usepackage{amsfonts}
\usepackage{amsxtra}






\usepackage{array,dcolumn}
\usepackage{graphicx}
\usepackage{float}
\usepackage[export]{adjustbox}
\usepackage{subcaption}
\usepackage[hyperfootnotes=true,
        pdffitwindow=true,
        plainpages=false,
        pdfpagelabels=true,
        pdfpagemode=UseOutlines,
        pdfpagelayout=SinglePage,
                hyperindex,]{hyperref}

\usepackage[T1]{fontenc}
\usepackage[utf8]{inputenc}
\usepackage[activate={true,nocompatibility},spacing,kerning]{microtype}
\usepackage{bm}
\usepackage{bbm}
\usepackage[sc,osf]{mathpazo}
\linespread{1.0425}
\microtypecontext{spacing=nonfrench}

 \setlength{\textwidth}{6.1in}
 \calclayout

\newcommand{\mathsym}[1]{{}}
\newcommand{\unicode}[1]{{}}

\theoremstyle{plain}
\newtheorem{theorem}{Theorem}

\newtheorem{corollary}[theorem]{Corollary}
\newtheorem{proposition}[theorem]{Proposition}

\theoremstyle{definition}

\theoremstyle{remark}
\newtheorem{remark}[theorem]{Remark}

\setlength{\parindent}{1.5em}

\renewcommand{\theequation}{\arabic{equation}}

\renewcommand{\leq}{\leqslant}

\newcommand{\al}{\alpha}

\newcommand\numberthis{\addtocounter{equation}{1}\tag{\theequation}}
\numberwithin{equation}{section}
\begin{document}
\title[Finite size corrections in the GUE and LUE]{Functional form for the leading correction to the distribution of the largest eigenvalue in the
GUE and LUE}
\author{Peter J. Forrester and Allan K. Trinh}
\address{Department of Mathematics and Statistics, 
ARC Centre of Excellence for Mathematical \& Statistical Frontiers,
University of Melbourne, Victoria 3010, Australia}
\email{pjforr@unimelb.edu.au}
\email{a.trinh4@student.unimelb.edu.au}
\date{\today}


\begin{abstract}
The neighbourhood of the largest eigenvalue $\lambda_{\rm max}$ in the Gaussian unitary ensemble (GUE) and Laguerre unitary ensemble (LUE) is referred to as the soft edge. It is known that there exists
a particular centring and scaling such that the distribution of $\lambda_{\rm max}$ tends to a universal form, with an error term bounded by $1/N^{2/3}$.  We take up the problem of computing the exact functional form of the leading error term in a large $N$ asymptotic expansion for both the GUE and LUE --- two versions of the LUE are considered, one with the parameter $a$ fixed, and the other with $a$ proportional to $N$. 
Both settings in the LUE case allow for an interpretation in terms of the distribution of a particular
weighted path length
in a model involving exponential variables on a rectangular grid, as the grid size gets large.
We give
operator theoretic forms of the corrections, which are corollaries of knowledge of the first two terms in the large $N$ expansion of the scaled kernel, and are readily computed using a method due to Bornemann.
We also give expressions in terms of the solutions of particular systems of coupled differential equations,
which provide an alternative method of computation. Both characterisations are well suited to a thinned generalisation of the original ensemble, whereby each eigenvalue is deleted independently with probability
$(1 - \xi)$. In the final section, we investigate using simulation the question of whether upon an appropriate centring and scaling a wider class of complex Hermitian random matrix ensembles have their
leading correction to the distribution of $\lambda_{\rm max}$ proportional to $1/N^{2/3}$. 
\end{abstract}

\maketitle

\section{Introduction}
In applications of random matrices, one often comes across the statement that the matrices should be
large and scaled. For example, in relation to the Riemann zeros, according to the Montgomery--Odlyzko law
\cite{KS99} the relevant matrices are complex Hermitian of a large size and bulk scaled. In applications to quantum spectra
with time reversal symmetry, it is large sized bulk scaled real symmetric matrices which are relevant; see e.g.~\cite{Bo05}.
For growth models on curved interfaces in the KPZ class, large complex Hermitian matrices of a large size are again relevant, but now with soft edge scaling \cite{PS01}.

As first noticed by Dyson \cite{Dy62}, applications of  random matrices requiring complex Hermitian matrices can often be reformulated to involve instead complex unitary matrices $U \in U(N)$. In the case of the Riemann zeros, this has been shown by Keating and Snaith \cite{KS00a} to have fundamental consequences: the $U(N)$ model permits for the quantitative prediction of certain ``finite size" effects when comparing against Odlyzko's
\cite{Od89} data set (high precision evaluation of the $10^{20}$-th Riemann zero and over 70 million neighbours).
In particular, the effective value of $N$ required in the $U(N)$ model was identified (it is proportional to the
logarithm of the zero number) for purposes of reproducing Odlyzko's plot of the value distribution of the logarithm
of the Riemann zeta function along the critical line in the neighbourhood of the $10^{20}$-th zero.

A data set beginning with the $10^{23}$-rd zero and its $10^9$ neighbours was announced by Odlyzko in 
\cite{Od01}. The greater statistical accuracy inherent in this data set relative to the original allowed for the finite size correction of the deviation of the empirical  nearest neighbour spacing distribution and the limiting random matrix distribution to be displayed. This derivation exhibited clear structure. Bogomolny and collaborators \cite{BBLM06}
took up the task of predicting its functional form. Consistent with the work of Keating and Snaith, a combination of analytic and numerical evidence was presented to exhibit that the $U(N)$ model again correctly accounts for the  derivation.

A problem left open from \cite{BBLM06}, and addressed in \cite{FM15,BFM17} was the analytic specification of the
correction term $r_2(0;s)$ in the large $N$ expansion
\begin{equation}\label{pr}
p^{U(N)}(0;s) = p_2(0;s) + {1 \over N^2} r_2(0;s) + \cdots
\end{equation}
Here $p^{U(N)}(0;s)$ denotes the consecutive eigen-angle spacing distribution for Haar distributed unitary
random matrices, with the angles rescaled to have mean spacing unity. On the RHS the quantity $p_2(0;s)$ ---
where the subscript ``2'' is the beta label from Dyson's three fold way \cite{Dy62c} in the absence
of time reversal symmetry --- is the limiting distribution. Pioneering work in random matrix theory due to Mehta
\cite{Me60} and Gaudin \cite{Ga61} paved the way for Dyson \cite{Dy62a} to obtain the Fredholm
determinant formula
\begin{equation}\label{1.2}
p_2(0;s) = {d^2 \over d s^2} \det (\mathbb I - \mathbb K_s),
\end{equation}
where $\mathbb K_s$ is the integral operator on $90,s)$ with kernel
$$
K(x,y) = {\sin \pi (x-y) \over \pi (x-y)}.
$$
Two decades later the Kyoto school of Jimbo et al.~\cite{JMMS80} put (\ref{1.2}) in the context of the
Painlev\'e theory, obtaining the result
\begin{equation}\label{yr}
 \det (\mathbb I - \xi \mathbb K_s) = \exp \Big (
\int_0^{\pi s} {\sigma^{(0)}(t;\xi) \over t} \, dt \Big ),
\end{equation}
where $\sigma^{(0)}$ satisfies the particular Painlev\'e V equation in sigma form
(see e.g.~\cite[Ch.~8]{Fo10} in relation to this class of nonlinear differential equation)
$$
(t \sigma'')^2 + 4 (t \sigma' - \sigma + (\sigma')^2) = 0,
$$
with small $t$ boundary condition
$$
\sigma^{(0)}(t;\xi) = - {\xi \over \pi} t - {\xi^2 \over \pi^2} t^2 +
{\rm O}(t^3).
$$
An expression of the type (\ref{yr}) is referred to as a $\tau$-function, in the sense of the Kyoto school.

It was shown in \cite{FM15} that analogous to $p_2(0;s)$, $r_2(0;s)$ permits both a Fredholm (operator theoretic) and differential
equation characterisation. With the definition
\begin{equation}\label{OIS}
\Omega(\mathbb K_s) : \mathbb L_S = - \det ( \mathbb I - \mathbb K_s) {\rm Tr} \, ((\mathbb I - \mathbb K_s)^{-1}
\mathbb L_s ),
\end{equation}
the former reads
$$
r_2(0;s) = {d^2 \over d s^2} \Omega(\mathbb K_s) : \mathbb L_S,
$$
while the latter involves a second order linear homogenous equation with coefficients given in terms of $\sigma^{(0)}$ .

For the $U(N)$ ensemble, the task of determining the functional form of a number of other spacing--type distributions was also undertaken in \cite{FM15,BFM17}, again motivated by a statistical analysis of Odlyko's data set. As
remarked in the opening paragraph, there are a number of applications in random matrix theory relating to distributions in both
the bulk and edge scaling regimes; \S \ref{s4.4} details one example of the latter involving the length
of a weighted path in a particular directed growth model.
 In the soft edge scaling regime, the specification of the analogue of $r_2(0;s)$ in (\ref{pr}) cannot be found in the earlier random matrix literature, although it is known that upon appropriate scaling
 and centring, the correction term for a wide class of classical random matrix
 ensemble is proportional to $1/N^{2/3}$ \cite{EK06,Jo08,JM12,Ma12}.
 We obtain characterisations of the precise functional form of this leading correction term for the GUE
 and LUE, and for the LUE consider both the case of the parameter $a$ fixed independent of $N$, and when $a$ is proportional to $N$. One characterisation is operator theoretic, and the other involves coupled differential equations. Both are suited to numerical computation, and allow too for a generalisation
 whereby each eigenvalue is deleted independently with probability $(1 - \xi)$. 
In the final section, an investigation using simulation of the distribution of the largest eigenvalue of a non-classical ensemble after subtraction of the leading form is made from the viewpoint of observing a $1/N^{2/3}$
correction.

\section{Operator theoretic formulae}\label{s2}
\subsection{Expansion of Fredholm determinants}

Up to normalisation, the eigenvalue PDF of the GUE and LUE have the functional form (see e.g. \cite[5.4.1]{Fo10})
\begin{equation}\label{2.1}
\prod_{l=1}^Nw(x_l)\prod_{1\leq j<k\leq N}(x_k-x_j)^2,
\end{equation}
where, with $\chi_T=1$ if $T$ is true, $\chi_T=0$ otherwise,
\begin{equation}\label{2.2}
w(x)=\begin{cases}
e^{-x^2},\quad &\text{GUE}
\\ x^ae^{-x}\chi_{x>0},\quad &\text{LUE}.
\end{cases}
\end{equation}
Let $\{p_n(x)\}_{n=0,1,...}$ be the set of monic orthogonal polynomials associated with $w(x)$,
\begin{equation}\label{2.3}
\int_{-\infty}^\infty w(x)p_m(x)p_n(x)\,\mathrm{d}x=h_n\delta_{m,n}.
\end{equation} 
With $H_n(x), L_n^a(x)$ denoting the Hermite and Laguerre polynomials, one has
\begin{equation}\label{2.4}
p_n(x)=\begin{cases}
2^{-n}H_n(x)
\\ (-1)^n n!L_n^a(x)
\end{cases},\quad
h_n=\begin{cases}
\pi^{1/2}2^{-n}n!
\\ n!\Gamma(a+n+1)
\end{cases}
\end{equation}
for the GUE and LUE respectively.

A fundamental property of the PDF \eqref{2.1} is that it specifies a determinantal point process. As such the general $k$-point correlation function is determined by a so-called kernel function $K_N(x,y)$ of just two variables according to
\begin{equation}\label{2.5}
\rho_{(k)}(x_1,...,x_k)=\det[K_N(x_i,x_j)]_{i,j=1,...,k}.
\end{equation}
Moreover, the kernel function is given in terms of the orthogonal polynomials by
\begin{align}\label{2.6}
K_N(x,y)&=\left(w(x)w(y)\right)^{1/2}\sum_{n=0}^{N-1}\frac{p_n(x)p_n(y)}{h_n} \nonumber
\\ &=\frac{\left(w(x)w(y)\right)^{1/2}}{h_{N-1}}\frac{p_N(x)p_{N-1}(y)-p_{N-1}(x)p_N(y)}{x-y},
\end{align}
with the equality in the second line known as the Christoffel-Darboux summation formula.

Let $J$ be a domain within the support of \eqref{2.1} and denote by $E_N(0;J)$ the probability that $J$ contains no eigenvalues. A fundamental corollary of the determinantal formula \eqref{2.5} is the operator theoretic determinant formula (see e.g. \cite[Prop. 5.2.1 and 5.2.2]{Fo10})
\begin{equation}\label{2.7}
E_N(0;J)=\det(\mathbb{I}-\mathbb{K}_{N,J}),
\end{equation}
where $\mathbb{I}$ is the identity operator and $\mathbb{K}_{N,J}$ is the integral operator supported on $J$ with kernel \eqref{2.6}.

We specialise now to the semi-infinite interval $J=(s,\infty)$, so that $E_N(0;J)$ is the probability that all eigenvalues are less than or equal to $s$. For a soft edge scaling, we want to furthermore choose the origin of the scaling variable, $t$ say, so that $t=0$ corresponds to the leading order boundary of the support of the eigenvalues, and that spacing between eigenvalues at the edge is of order unity in the variable $t$. This is achieved by the change of variables \cite{Fo93a}
\begin{equation}\label{2.8}
s=s_t=\begin{cases}
\sqrt{2N}+t/\sqrt{2}N^{1/6}
\\ 4N+2a+2(2N)^{1/3}t
\end{cases}
\end{equation}
for the GUE and LUE respectively. The PDF of the largest soft edge scaled eigenvalue is then
\begin{equation}\label{2.9}
p_N(t)=\frac{\mathrm{d}}{\mathrm{d}t}\det(\mathbb{I}-\mathbb{K}_{N,(s_t,\infty)}).
\end{equation}

As an extension of the above setting, suppose each eigenvalue is deleted independently with probability $(1-\xi)$, where $0<\xi\leq 1$. This process is referred to as thinning. It has attracted a lot of recent attention in the random matrix theory literature \cite{Fo14a,FM15,BFM17,CC17,BD17,La16,BB17}, after having been introduced over a decade earlier by Bohigas and Pato \cite{BP04}. With $p_N^\xi(t)$ denoting the PDF of the corresponding largest eigenvalue, the formula \eqref{2.9} has the simple generalisation
\begin{equation}\label{2.10}
p_N^\xi(t)=\frac{\mathrm{d}}{\mathrm{d}t}\det(\mathbb{I}-\xi\mathbb{K}_{N,(s_t,\infty)}).
\end{equation}

For both the GUE and LUE (with $a$ fixed independent of $N$), the first two terms in the large $N$ expansion of the soft edge scaled correlation kernel are known explicitly, as is a bound on the remainder \cite{Ch06}. We state this result next, then proceed to show how it can be used to expand \eqref{2.9} for large $N$ up to and including the first correction.
\begin{proposition}(Choup \cite{Ch06})\label{p1}
Consider the GUE and LUE with $a$ fixed. Specify the correlation kernel $K_N(x,y)$ by \eqref{2.6} with quantities there given by \eqref{2.2} and \eqref{2.4}. Replace $x$ and $y$ by the soft edge scaling variables $s_x$ and $s_y$ \eqref{2.8}.

For large $N$ one has
\begin{equation}\label{2.11}
\left(\frac{\partial s_x}{\partial x}\right)K_N(s_x,s_y)=K(x,y)+\frac{1}{N^{2/3}}L(x,y)+{\rm O}\left(\frac{1}{N}\right){\rm O}(e^{-x-y})
\end{equation}
where
\begin{equation}\label{2.12}
K(x,y)=\frac{\mathrm{Ai}(x)\mathrm{Ai}'(y)-\mathrm{Ai}'(x)\mathrm{Ai}(y)}{x-y}
\end{equation}
and
\begin{equation*}
L(x,y)=
\begin{cases}
\begin{split}
\frac{1}{20}\bigg[(x+y)\mathrm{Ai}'(x)\mathrm{Ai}'(y)&-(x^2+xy+y^2)\mathrm{Ai}(x)\mathrm{Ai}(y)
\\ &+\frac{3}{2}(\mathrm{Ai}'(x)\mathrm{Ai}(y)+\mathrm{Ai}(x)\mathrm{Ai}'(y))\bigg],
\end{split}\quad &\text{GUE},
\\ 
\begin{split}
\frac{2^{1/3}}{10}\bigg[(x^2+xy+y^2)\mathrm{Ai}'(x)\mathrm{Ai}'(y)&-(x^2+xy+y^2)\mathrm{Ai}(x)\mathrm{Ai}(y) \nonumber
\\ &+\frac{3}{2}(\mathrm{Ai}'(x)\mathrm{Ai}(y)+\mathrm{Ai}(x)\mathrm{Ai}'(y))\bigg],
\end{split}\quad &\text{LUE}.\numberthis\label{2.13}
\end{cases}
\end{equation*}
\end{proposition}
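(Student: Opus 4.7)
The plan is to start from the Christoffel--Darboux form of the kernel in the second line of (\ref{2.6}) and insert refined Plancherel--Rotach asymptotics for the weighted Hermite (GUE) and Laguerre (LUE) polynomials at the soft edge. Concretely, I would aim for two-term expansions of the type
\begin{equation*}
e^{-s_x^2/2} H_N(s_x) = c_N \Bigl[ \mathrm{Ai}(x) + \tfrac{1}{N^{2/3}}\, \alpha(x) + \mathrm{O}(N^{-4/3})\mathrm{O}(e^{-x}) \Bigr],
\end{equation*}
with an analogous expansion for $H_{N-1}(s_x)$ and Laguerre counterparts $e^{-s_x/2}s_x^{a/2}L_N^a(s_x)$ in the LUE case. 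Such expansions can be derived either by classical saddle-point (Olver) analysis of the contour-integral representations of $H_N$ and $L_N^a$, or more systematically from the Deift--Zhou nonlinear steepest descent for the associated Riemann--Hilbert problems: the soft-edge local Airy parametrix gives the leading $\mathrm{Ai}(x)$, while the first subleading correction of the global parametrix and the next term in the jump-matrix expansion together produce $\alpha(x)$. For $a$ fixed in the LUE, the weight remains essentially Freud-type at the edge, so no parameter-induced modification of the Airy parametrix occurs.

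The second step is algebraic. I substitute the two-term expansions into the Christoffel--Darboux numerator, divide by the appropriate ratios of leading coefficients and normalisation constants $h_N$ (which are ratios of $\Gamma$-functions whose asymptotics contribute explicit polynomial prefactors), and multiply by the Jacobian $\partial s_x/\partial x$ equal to $1/(\sqrt{2}N^{1/6})$ for the GUE and $2(2N)^{1/3}$ for the LUE. The leading order collapses to $[\mathrm{Ai}(x)\mathrm{Ai}'(y) - \mathrm{Ai}'(x)\mathrm{Ai}(y)]/(x-y)$, reproducing (\ref{2.12}). The $1/N^{2/3}$ piece is a bilinear form in $\mathrm{Ai}$, $\mathrm{Ai}'$, $\alpha$, $\alpha'$ at $x$ and $y$; using the Airy ODE $\mathrm{Ai}''(z) = z\mathrm{Ai}(z)$ to eliminate any second derivatives that appear, together with the identity $(x+y)+(x-y) = 2x$ to rearrange the Christoffel--Darboux-type denominator, I expect the terms involving $\alpha$ themselves to cancel in $x-y$ and leave a polynomial-in-$(x,y)$ combination of $\mathrm{Ai}$, $\mathrm{Ai}'$ of precisely the symmetric form (\ref{2.13}). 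Comparing the leading numerical coefficients then pins down the overall $1/20$ and $2^{1/3}/10$.

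For the remainder, I would use that for positive $x,y$ the Airy function itself decays like $e^{-\frac{2}{3}x^{3/2}}\ll e^{-x}$, and the higher-order corrections in the Plancherel--Rotach expansions inherit the same exponential decay uniformly on $[x_0,\infty)$; multiplying the two one-variable bounds yields the claimed $\mathrm{O}(e^{-x-y})$ factor, while the $\mathrm{O}(1/N)$ comes from the next term in the Riemann--Hilbert (or saddle-point) error estimate.

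The hard part is not the algebraic reduction in step two but rather establishing the precise form of $\alpha(x)$ together with error bounds that are both uniform in $x$ on compacta and sharp enough, as $x \to +\infty$, to reproduce the exponential decay $e^{-x-y}$ in the final remainder. In the Hermite case this amounts to pushing Olver's uniform asymptotics for $H_N$ near its largest zero one order beyond the standard Airy approximation; in the Laguerre case with $a$ fixed, one must additionally verify that the $a$-dependence drops out of the first subleading correction, a nontrivial but checkable consequence of how the Mellin--Barnes contour for $L_N^a$ behaves at its relevant saddle. This is the content of Choup's analysis, and the consistency check that the GUE and LUE expressions in (\ref{2.13}) coincide on the $(x+y)$-term up to the explicit prefactor provides useful confirmation.
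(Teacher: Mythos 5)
The paper attributes Proposition~\ref{p1} to Choup and gives no independent proof; what it proves directly, in the section on the LUE with $a = \alpha N$, is the analogous kernel expansion via a \emph{double} contour-integral representation of the whole kernel and a single joint saddle-point expansion (see \eqref{5}--\eqref{19}). Choup's original argument, like your sketch, does proceed via the Christoffel--Darboux formula plus refined Plancherel--Rotach asymptotics for the individual weighted polynomials, so your plan is in the right family. But there is a concrete gap in your starting ansatz that would derail the calculation. You posit a two-term expansion $e^{-s_x^2/2}H_N(s_x) = c_N\bigl[\mathrm{Ai}(x) + N^{-2/3}\alpha(x) + \cdots\bigr]$ ``with an analogous expansion for $H_{N-1}(s_x)$''. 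This omits a genuine $O(N^{-1/3})$ term, and without it the calculation fails at the outset. With the fixed centring $\sqrt{2N}$ in \eqref{2.8}, the natural Plancherel--Rotach turning points of $H_N$ and $H_{N-1}$ are near $\sqrt{2N\pm 1}$ and are thus offset from $\sqrt{2N}$ by roughly $\mp\tfrac12 N^{-1/3}$ in the scaled variable; correspondingly the two weighted polynomials carry leading corrections $\mp\tfrac12 N^{-1/3}\mathrm{Ai}'(x)$ of opposite sign. The Christoffel--Darboux numerator therefore vanishes identically at $O(1)$ (the Airy cancellation), and its first non-zero term is at $O(N^{-1/3})$ --- it is precisely this $O(N^{-1/3})$ piece, multiplied by an $O(N^{1/3})$ normalisation prefactor, that reproduces the Airy kernel numerator $\mathrm{Ai}(x)\mathrm{Ai}'(y)-\mathrm{Ai}'(x)\mathrm{Ai}(y)$ in \eqref{2.12}.

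Accepting this, the kernel's $N^{-2/3}$ correction comes from the numerator at $O(N^{-1})$, so you need each polynomial expansion through $O(N^{-1})$ together with all $N^{-1/3}\times N^{-2/3}$ cross products; a two-term ansatz with first correction at $N^{-2/3}$ cannot produce \eqref{2.13}, and in fact fails even to reproduce the leading Airy kernel from the Christoffel--Darboux quotient. The same structure shows up in the paper's proof of the $a=\alpha N$ analogue, where the $N^{-1/3}$ contribution to the kernel expansion in \eqref{19} is explicitly present as a separate term and is only \emph{a~posteriori} verified to vanish. Given this bookkeeping, you may find it cleaner to adapt the paper's method: represent the kernel as a double contour integral, perform one joint saddle-point expansion with the two Airy functions emerging from the two integration variables, and read off the orders. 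This keeps the $H_N$/$H_{N-1}$ cross terms together automatically and makes the $N^{-1/3}$ cancellation structural rather than accidental.
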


\begin{corollary}\label{cor2}
Let $\mathbb{K}^N_{(s_t,\infty)}$ be the integral operator on $(s_t,\infty)$ with kernel \eqref{2.6} in the Gaussian or Laguerre case, $s_t$ denoting the soft edge scaled variable \eqref{2.8}. Let $\mathbb{K}_{(t,\infty)}$ denote the integral operator on $(t,\infty)$ with kernel \eqref{2.12}, and $\mathbb{L}_{(t,\infty)}$ denote the integral operator on $(t,\infty)$ with kernel \eqref{2.13}. Define $\Omega(\xi\mathbb{K}_{(t,\infty)}): \xi \mathbb{L}_{(t,\infty)}$ according to 
 (\ref{OIS}). We have
\begin{equation}\label{2.14}
\det(\mathbb{1}-\xi\mathbb{K}^N_{(s_t,\infty)})=\det(\mathbb{1}-\xi\mathbb{K}_{(t,\infty)})+\frac{1}{N^{2/3}}\Omega(\xi\mathbb{K}_{(t,\infty)}):\xi \mathbb{L}_{(t,\infty)}+{\rm O}\left(\frac{1}{N}\right)
\end{equation}
and thus
\begin{equation}\label{2.15}
p_N^\xi(t)=p_{0,\infty}^\xi(t)+\frac{1}{N^{2/3}}p_{1,\infty}^\xi(t)+{\rm O}\left(\frac{1}{N}\right)
\end{equation}
with
\begin{equation}\label{2.16}
p_{0,\infty}^\xi(t)=\frac{\mathrm{d}}{\mathrm{d}t}\det(\mathbb{1}-\xi\mathbb{K}_{(t,\infty)})
\end{equation}
and
\begin{equation}\label{2.17}
p_{1,\infty}^\xi(t)=\frac{\mathrm{d}}{\mathrm{d}t}\Omega(\xi\mathbb{K}_{(t,\infty)}):\xi \mathbb{L}_{(t,\infty)}.
\end{equation}
\end{corollary}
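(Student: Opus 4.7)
The plan is to combine the pointwise kernel expansion of Proposition \ref{p1} with the standard first-order perturbation formula for Fredholm determinants; the exponential decay in the error term furnishes the trace class control required on the semi-infinite interval.

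First I would rewrite the left hand side of (\ref{2.14}) as a Fredholm determinant on the fixed interval $(t,\infty)$. Under the change of variables $x = s_u$, $y = s_v$, with $\partial s/\partial \cdot$ constant in $N$ for both GUE and LUE scalings, diffeomorphism invariance of Fredholm determinants gives
$$
\det(\mathbb{I} - \xi \mathbb{K}^N_{(s_t,\infty)}) = \det(\mathbb{I} - \xi \tilde{\mathbb{K}}^N_{(t,\infty)}),
$$
where the kernel of $\tilde{\mathbb{K}}^N$ is precisely $(\partial s_x/\partial x) K_N(s_x, s_y)$. Proposition \ref{p1} then reads
$$
\tilde K^N(x,y) = K(x,y) + N^{-2/3} L(x,y) + \rho_N(x,y), \qquad |\rho_N(x,y)| \leq C N^{-1} e^{-x-y}.
$$

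Second, I would apply the Fr\'echet linearisation of the Fredholm determinant in its trace class argument: for trace class $\mathbb{M}$,
$$
\det(\mathbb{I} - \xi(\mathbb{K} + \mathbb{M})) = \det(\mathbb{I} - \xi \mathbb{K}) - \xi \det(\mathbb{I} - \xi \mathbb{K}) \, \Tr\bigl((\mathbb{I} - \xi \mathbb{K})^{-1} \mathbb{M}\bigr) + O(\|\mathbb{M}\|_1^2).
$$
Taking $\mathbb{K} = \mathbb{K}_{(t,\infty)}$ and $\mathbb{M} = N^{-2/3}\mathbb{L}_{(t,\infty)} + \mathbb{R}_N$, where $\mathbb{R}_N$ has kernel $\rho_N$, the linear contribution of $\mathbb{L}_{(t,\infty)}$ is exactly $N^{-2/3} \Omega(\xi \mathbb{K}_{(t,\infty)}) : \xi \mathbb{L}_{(t,\infty)}$ by the definition (\ref{OIS}), the linear contribution of $\mathbb{R}_N$ is $O(N^{-1})$, and the quadratic remainder is $O(N^{-4/3}) = O(N^{-1})$. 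This establishes (\ref{2.14}), and differentiating once in $t$ yields (\ref{2.15})--(\ref{2.17}); the exchange of $d/dt$ with the $N \to \infty$ expansion is legitimate because $t$ enters only through the lower endpoint of integration and all estimates are locally uniform in $t$.

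The main obstacle will be the trace class bookkeeping. One needs $\mathbb{K}_{(t,\infty)}$ and $\mathbb{L}_{(t,\infty)}$ to be trace class on the semi-infinite interval, which is immediate from the Airy estimate $\mathrm{Ai}(x) = O(x^{-1/4} e^{-(2/3)x^{3/2}})$ furnishing super-exponential decay of both kernels; and one needs $\|\mathbb{R}_N\|_1 = O(N^{-1})$ uniformly for $t$ in compact subsets of $\mathbb{R}$. The latter is the most delicate point: the pointwise bound $|\rho_N(x,y)| \leq C N^{-1} e^{-x-y}$ immediately gives $\|\mathbb{R}_N\|_{HS} = O(N^{-1})$, and a factorisation of $\mathbb{R}_N$ into two Hilbert--Schmidt pieces with matching norm bounds upgrades this to the required trace class estimate. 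Combined with invertibility of $\mathbb{I} - \xi\mathbb{K}_{(t,\infty)}$ for $\xi \in (0,1]$ (the spectrum of the Airy kernel lies in $[0,1)$), which makes the resolvent bounded, the linearisation argument above goes through, and both (\ref{2.14}) and the differentiated identities (\ref{2.15})--(\ref{2.17}) follow.
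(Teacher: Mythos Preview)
Your proposal is correct and is essentially the same approach as the paper's: the paper simply cites \cite{BFM17} for the bounded-interval case and observes that the fast decay of $K$, $L$ and the remainder in (\ref{2.11}) makes the reasoning go through on the semi-infinite interval $(t,\infty)$, which is exactly the change-of-variables plus Fr\'echet linearisation plus trace-class-via-decay argument you have written out in detail. Your explicit identification of the delicate point (upgrading the $O(N^{-1})e^{-x-y}$ pointwise bound on the remainder to a trace-class bound) is precisely the content hidden behind the paper's phrase ``the fast decay \ldots\ compensate for this''.
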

\begin{proof}
In \cite{BFM17} an analogous result was established except in a setting that the interval $J$ was bounded. In the present setting, after scaling $J$ is the semi-infinite interval $(t,\infty)$. But the fast decay of $K(x,y)$, $L(x,y)$ and of the remainder in \eqref{2.11} compensate for this, allowing the reasoning from \cite{BFM17} to again be applied to deduce \eqref{2.14}.
\end{proof}

\subsection{Numerical evaluations}\label{NE}

Bornemann \cite{Bo08,Bo09} has shown how to carry out the numerical evaluation of $\det(\mathbb{I}-\mathbb{K}_J)$ with exponentially fast convergence in all cases that $K(x,y)$ is analytic in a neighbourhood of $J$. Extension of these ideas to allow for the computation of $\Omega(\mathbb{K}):\mathbb{L}$ was given subsequently in \cite{BFM17}.

The method relies on a quadrature rule
\begin{equation*}
\int_Jf(x)\,\mathrm{d}x\approx\sum_{j=1}^nf(x_j)w_j
\end{equation*}
with positive weights $w_j$. Data from the quadrature rule is used to construct the Nystr\"om matrix
\begin{equation*}
\mathbb{K}_w=\left(K(x_j,x_k)w_k\right)_{j,k=1}^n.
\end{equation*}
This provides an approximation to the original Fredholm determinant
\begin{equation*}
\det(\mathbb{I}- \xi \mathbb{K})\approx\det(\mathbb{I}- \xi \mathbb{K}_w)
\end{equation*}
with the sought fast convergence properties. In the case of $\xi \Omega(\mathbb{K}): \xi \mathbb{L}$, the Nystr\"om matrices for both $\mathbb{K}$ and $\mathbb{L}$ are required, and the approximation \cite{BFM17}
\begin{equation*}
\Omega(\xi \mathbb{K}):\xi \mathbb{L}\approx-\det(\mathbb{I}- \xi \mathbb{K}_w) {\rm Tr}  \left((\mathbb{I}-\xi \mathbb{K}_w)^{-1}\xi \mathbb{L}_w\right)
\end{equation*}
exhibits fast convergence properties.

For the soft edge scaled GUE and LUE the operator $\mathbb{K}_{(t,\infty)}$ in \eqref{2.16} has the same kernel \eqref{2.12} and thus
\begin{equation*}
p_{0,\infty}^{\xi,\text{GUE}}(t)=p_{0,\infty}^{\xi,\text{LUE}}(t),
\end{equation*}
which is an example of the soft edge universality. However the kernel $L(x,y)$ \eqref{2.13} specifying the operator $\mathbb{L}_{(t,\infty)}$ in \eqref{2.17} is different in the two cases, and so the functional forms $p_{1,\infty}^{\xi,\text{GUE}}(t)$ and $p_{1,\infty}^{\xi,\text{LUE}}(t)$ will be distinct.

In Figures \ref{fig:1} and \ref{fig:2} we have used Bornemann's method outlined above together with the formulas \eqref{2.7}, \eqref{2.6}, \eqref{2.9} -- all using the soft edge scaled variables \eqref{2.8} -- \eqref{2.16} and \eqref{2.12} to compute
\begin{equation}\label{6.1}
N^{2/3}\left(p_N^\xi(s)-p_{0,\infty}^{\xi}(s)\right)
\end{equation}
for the soft edge scaled GUE and LUE with particular $\xi$ (the necessary derivatives are calculated using a central difference approximation). Superimposed are the functional forms $p_{1,\infty}^\xi(s)$ in the respective cases, calculated by applying Bornemann's method to \eqref{2.17}, with kernels corresponding to $\mathbb{K}_{(t,\infty)}$ and $\mathbb{L}_{(t,\infty)}$ given by \eqref{2.12} and \eqref{2.13} respectively. Observe that while the general shapes always agree, in both
the GUE and LUE plots there can systematic deviations between the two curves for certain ranges of $s$ values.
This is to be expected, as (\ref{6.1}) contains all the lower order corrections, not just $p_{1,\infty}^\xi(s)$.

\begin{figure}[H]
	\centering
	\begin{subfigure}[t]{0.5\textwidth}
		\centering
		\includegraphics[height=1.6in]{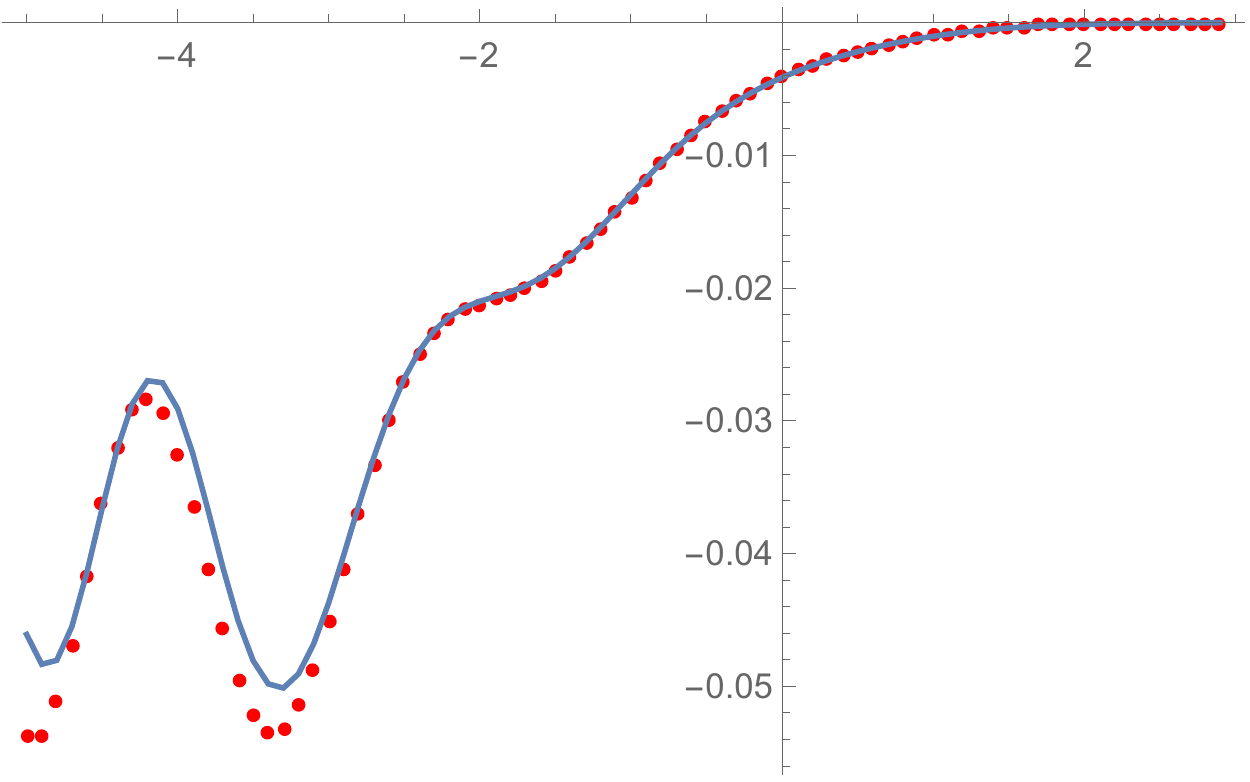}
		\caption{$\xi=0.3$}
	\end{subfigure}%
	~
	\begin{subfigure}[t]{0.5\textwidth}
		\centering
		\includegraphics[height=1.6in]{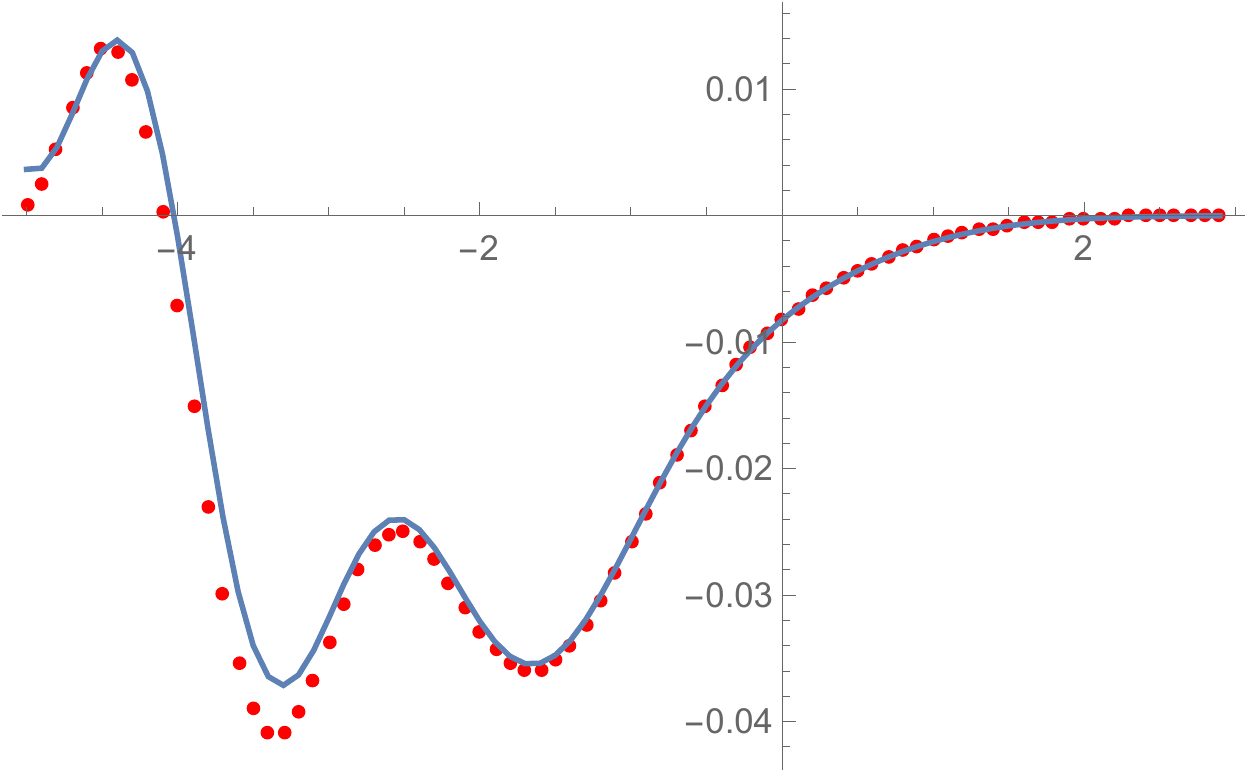}
		\caption{$\xi=0.6$}
	\end{subfigure}
	\caption{GUE. \label{fig:1} Solid line is $p_{1,\infty}^{\xi, G}(s)$, dots are the scaled difference (\ref{6.1}).}
\end{figure}

\begin{figure}[H]
	\centering
	\begin{subfigure}[t]{0.5\textwidth}
		\centering
		\includegraphics[height=1.6in]{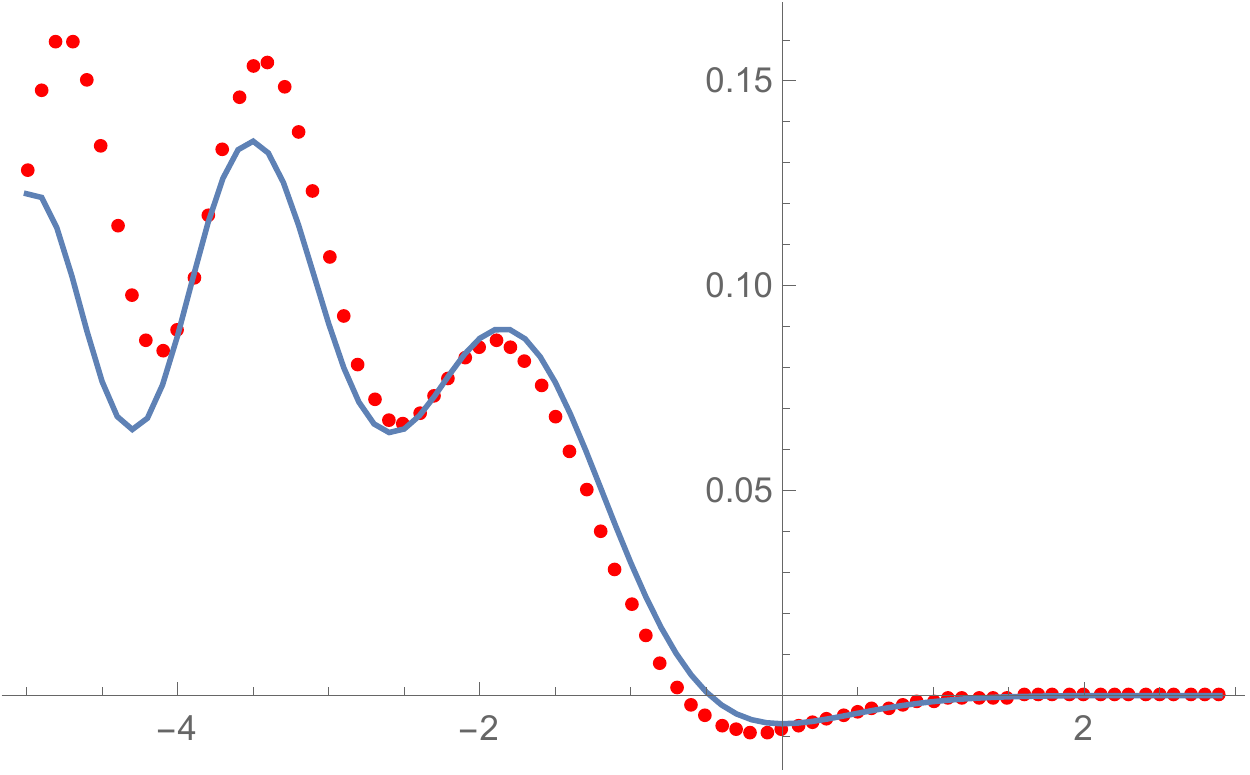}
		\caption{$\xi=0.3$}
	\end{subfigure}%
	~
	\begin{subfigure}[t]{0.5\textwidth}
		\centering
		\includegraphics[height=1.6in]{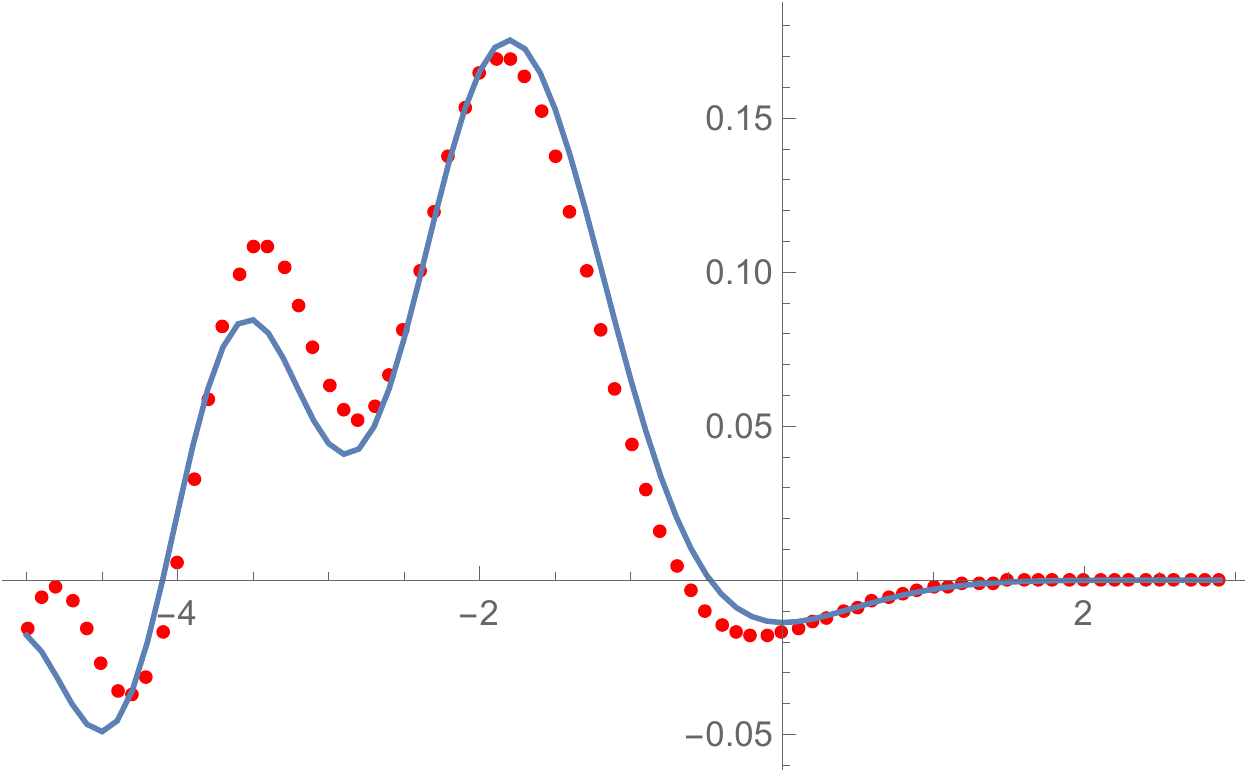}
		\caption{$\xi=0.6$}
	\end{subfigure}
	\caption{LUE with $a=1$. \label{fig:2} Solid line is $p_{1,\infty}^{\xi, L}(s)$, dots are the scaled difference (\ref{6.1}).}
\end{figure}

\section{Characterisation in terms of differential equations}\label{s3}
\subsection{Soft edge GUE}

For the $N\times N$ GUE, in addition to the operator theoretic formula (\ref{2.9}), one has the $\tau$-function expression \cite{TW94c,FW00}
\begin{equation}\label{3.1}
\det\left(\mathbb{I}-\xi\mathbb{K}_{N,(s,\infty)}^{\text{GUE}}\right)=\exp \Big (-\int_s^\infty U_N^G(x;\xi)\,\mathrm{d}x \Big ),
\end{equation}
where $U_N^G$ is the solution of the particular $\sigma$-form of the Painlev\'e IV differential equation
\begin{equation}\label{3.2}
(\sigma'')^2-4(x\sigma'-\sigma)^2+4(\sigma')^2(\sigma'+2N)=0,
\end{equation}
subject to the boundary condition
\begin{equation}\label{3.3}
U_N^G(x;\xi)\underset{x\to\infty}{\sim}\xi K_N^G(x,x).
\end{equation}

Introducing the GUE soft edge change of variables $x=x_y$ as specified by (\ref{2.8}) permits us to expand
\begin{equation}\label{3.4}
\frac{1}{\sqrt{2}N^{1/6}}U_N^G\left(\sqrt{2N}+\frac{y}{\sqrt{2}N^{1/6}}\right)=\sigma_0^G(y)+\frac{1}{N^{2/3}}\sigma_1^G(y)+{\rm O}\left(\frac{1}{N}\right)
\end{equation}
and so obtain a differential equation characterisation of $\sigma_0^G$ and $\sigma_1^G$, which in turn can be used to specify $p_{1,\infty}^{\xi,\text{GUE}}$. This is the same general strategy as that used in \cite{FM15} to obtain a differential equation characterisation of $r_2(0;s)$ in the large $N$ expansion (\ref{pr}).

The boundary conditions in the differential equation characterisation make use of the soft edge scaled expansion of \eqref{3.3}. First, as a point of interest we note from (\ref{2.5}) that $K_N(x,x)=\rho_N^G(x)$, with $\rho_N^G(x)$ denoting the spectral density. For its large $N$ expansion with the soft edge scaling variable $x=x_y$ we can make use of 
(\ref{2.11})--(\ref{2.13}) to deduce (see \cite{GFF05} for a direct computation)
\begin{equation}\label{3.5}
\frac{1}{\sqrt{2}N^{1/6}}\rho_N^G\left(\sqrt{2N}+\frac{y}{\sqrt{2}N^{1/6}}\right)=\rho_{0,\infty}^G(y)+\frac{1}{N^{2/3}}\rho_{1,\infty}^G(y)+{\rm O}\left(\frac{1}{N}\right),
\end{equation}
where
\begin{align*}
\rho_{0,\infty}^G(y)&=\left(\mathrm{Ai}'(y)\right)^2-y\left(\mathrm{Ai}(y)\right)^2
\\ \rho_{1,\infty}^G(y)&=-\frac{1}{20}\left(3y^2\left(\mathrm{Ai}(y)\right)^2-2y\left(\mathrm{Ai}'(y)\right)^2-3\mathrm{Ai}(y)\mathrm{Ai}'(y)\right).
\end{align*}

\begin{proposition}\label{p3}
Consider the expansion (\ref{2.15}) for the GUE. Define $\sigma_0^{\rm G} = \sigma_0^{\rm G}(y;\xi)$ as the solution
of the particular $\sigma$-form Painlev\'e II equation
\begin{equation}\label{3.6}
(\sigma'')^2 + 4 \sigma'((\sigma')^2 - y \sigma' + \sigma) = 0
\end{equation}
subject to the boundary condition
\begin{equation}\label{3.7}
\sigma_0^{\rm G}(y;\xi) \mathop{\sim}\limits_{y \to \infty} \xi \rho_{0,\infty}^{\rm G}(y).
\end{equation}
Define $\sigma_1^{\rm G} = \sigma_1^{\rm G}(y;\xi)$ as the solution of the second order linear
differential equation
\begin{equation}\label{3.8}
A(y) \sigma'' + B(y) \sigma' + C(y) \sigma = D(y),
\end{equation}
where, with $\sigma_0 := \sigma_0^{\rm G}$ as specified above
\begin{align}\label{3.9}
A(y) & = 2 \sigma_0''(y) \nonumber \\
B(y) & = 12  (\sigma_0'(y))^2 - 8 y \sigma_0'(y) + 4 \sigma_0(y) \nonumber \\
C(y) & = 4 \sigma_0'(y) \nonumber \\
D(y) & = (\sigma_0(y))^2 - 2 y \sigma_0(y) \sigma_0'(y) + y^2 ( \sigma_0'(y))^2,
\end{align}
subject to the boundary condition
\begin{equation}\label{3.10}
\sigma_1^{(G)}(y;\xi)\underset{y\to\infty}{\sim}\xi\rho_{1,\infty}^G(y).
\end{equation}
Then, with the symbol $y$ used in place of $t$ as used in \eqref{2.15},
\begin{align*}
p_{0,\infty}^{\xi,G}(y)&=\sigma_0^G(y;\xi)\exp\Big (-\int_y^\infty\sigma_0^G(t;\xi)\,\mathrm{d}t \Big )
\\ p_{1,\infty}^{\xi,G}(y)&=-\sigma_0^G(y;\xi)\left(\int_y^\infty\sigma_1^G(t;\xi)\,\mathrm{d}t\right)
\\
& \quad \times \exp\Big (-\int_y^\infty\sigma_0^G(t;\xi)\,\mathrm{d}t \Big )+\sigma_1^G(y;\xi)\exp\Big (-\int_y^\infty\sigma_0^G(t;\xi)\,\mathrm{d}t \Big ).
\numberthis\label{3.11}
\end{align*}
\end{proposition}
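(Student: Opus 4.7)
The plan is to push the $\tau$-function representation (3.1)--(3.3) through the soft edge change of variables, match orders in $1/N^{2/3}$, and read off (3.6)--(3.11). I will proceed in four steps.

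\textbf{Step 1 (scaled Painlev\'e IV).} First I substitute $x = s_y = \sqrt{2N} + y/(\sqrt 2 N^{1/6})$ into (3.2). Writing $\tilde U_N(y) := (\sqrt 2 N^{1/6})^{-1} U_N^{\rm G}(s_y;\xi)$, chain rule gives $U_N^{\rm G}(s_y) = \sqrt 2 N^{1/6}\tilde U_N(y)$, $(U_N^{\rm G})'(s_y) = 2 N^{1/3} \tilde U_N'(y)$, and $(U_N^{\rm G})''(s_y) = 2\sqrt 2 N^{1/2} \tilde U_N''(y)$. Plugging these into (3.2) and collecting powers of $N$, the leading $N^{5/3}$ pieces cancel between $-4(x\sigma'-\sigma)^2$ and $4(\sigma')^2\cdot 2N$, and dividing by $8N$ produces the scaled equation
\begin{equation*}
(\tilde U_N'')^2 + 4\tilde U_N'\bigl[(\tilde U_N')^2 - y\tilde U_N' + \tilde U_N\bigr] = \frac{1}{N^{2/3}}\bigl(y\tilde U_N' - \tilde U_N\bigr)^2.
\end{equation*}

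\textbf{Step 2 (matching orders).} Inserting the ansatz $\tilde U_N = \sigma_0^{\rm G} + N^{-2/3}\sigma_1^{\rm G} + {\rm O}(N^{-1})$ from (3.4) and equating the $N^0$ terms yields exactly the $\sigma$-form Painlev\'e~II equation (3.6) for $\sigma_0^{\rm G}$. Collecting the $N^{-2/3}$ coefficient with $\sigma_0 := \sigma_0^{\rm G}$ and $\sigma := \sigma_1^{\rm G}$ gives
\begin{equation*}
2\sigma_0''\sigma'' + 4\sigma'\bigl[(\sigma_0')^2 - y\sigma_0' + \sigma_0\bigr] + 4\sigma_0'\bigl[2\sigma_0'\sigma' - y\sigma' + \sigma\bigr] = (y\sigma_0'-\sigma_0)^2,
\end{equation*}
which after rearrangement is precisely the linear ODE (3.8)--(3.9).

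\textbf{Step 3 (boundary conditions).} The existence of the expansion (3.4) and the validity of (3.7), (3.10) are anchored at $y\to\infty$ via (3.3): since $\det(\mathbb I - \xi \mathbb K^N_{N,(s,\infty)}) \to 1$ as $s\to\infty$, (3.3) reads $U_N^{\rm G}(x;\xi) \sim \xi K_N^{\rm G}(x,x) = \xi\rho_N^{\rm G}(x)$. Multiplying by $(\sqrt 2 N^{1/6})^{-1}$ and invoking the soft edge density expansion (3.5) yields
\begin{equation*}
\tilde U_N(y;\xi) \mathop{\sim}\limits_{y\to\infty} \xi\rho_{0,\infty}^{\rm G}(y) + \frac{1}{N^{2/3}}\xi\rho_{1,\infty}^{\rm G}(y) + \cdots,
\end{equation*}
from which (3.7) and (3.10) follow by matching orders. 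Note that (3.5) itself is a corollary of Proposition~1 applied on the diagonal $x=y$ (as remarked in the text), so this step is consistent with the already-established operator theoretic expansion.

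\textbf{Step 4 (from $\tilde U_N$ to $p_N^\xi$).} Using $p_N^\xi(y) = \frac{d}{dy}\det(\mathbb I - \xi \mathbb K^N_{N,(s_y,\infty)})$ together with (3.1) and the change of variable $x = s_u$, I rewrite
\begin{equation*}
\det(\mathbb I - \xi \mathbb K^N_{N,(s_y,\infty)}) = \exp\Bigl(-\int_y^\infty \tilde U_N(u;\xi)\,du\Bigr),
\end{equation*}
so that $p_N^\xi(y) = \tilde U_N(y;\xi)\exp\bigl(-\int_y^\infty \tilde U_N(u;\xi)\,du\bigr)$. Substituting $\tilde U_N = \sigma_0^{\rm G} + N^{-2/3}\sigma_1^{\rm G} + \cdots$, expanding the exponential to first order in $N^{-2/3}$, and reading off the coefficients of $1$ and $N^{-2/3}$ produces the two formulas in (3.11).

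\textbf{Main obstacle.} The algebra in Step~1 and Step~2 is mechanical but error-prone: the $N^{5/3}$ cancellation is essential, and one must verify that no spurious $N^{1/3}$ or $N^{-1/3}$ terms survive when matching to $O(N^{-2/3})$. Beyond the bookkeeping, the genuinely delicate point is justifying that the asymptotic relation (3.3) propagates through the scaling to give the expansion (3.4) uniformly enough for the ansatz to be self-consistent; this is where one leans on Choup's bound (2.11) and the exponential decay of the Airy kernel and of $L(x,y)$ to dominate remainder terms.
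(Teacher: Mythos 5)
Your proposal is correct and follows essentially the same route as the paper's proof: substitute the soft-edge scaling into (3.2), match orders to obtain (3.6) and (3.8)--(3.9), derive the boundary conditions from (3.3) and (3.5), and expand the exponential in (3.1) to read off (3.11). You supply the explicit chain-rule bookkeeping and the scaled equation $(\tilde U_N'')^2 + 4\tilde U_N'[(\tilde U_N')^2 - y\tilde U_N' + \tilde U_N] = N^{-2/3}(y\tilde U_N' - \tilde U_N)^2$, which the paper leaves implicit but which checks out exactly.
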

\begin{proof}
As already commented, to obtain the characterisation of $\sigma_0^G$ and $\sigma_1^G$ we introduce the soft edge scaled $y$ into \eqref{3.2} by the change of variables $x=x_y$, then we expand the solution of interest $U_N^G$ according to \eqref{3.4}. The coupled differential equations \eqref{3.6}, \eqref{3.8} then result by equating the first two leading orders in $N$. For the boundary conditions \eqref{3.7} and \eqref{3.10}, we multiply both sides of \eqref{3.3} by $(1/\sqrt{2}N^{1/6})$ then substitute \eqref{3.5} on the RHS and \eqref{3.4} on the LHS. Equating like powers of $N$, \eqref{3.7}, \eqref{3.10} result.

It follows from \eqref{2.9}, \eqref{3.1} and \eqref{3.4} that
\begin{equation}\label{3.11a}
p_N^\xi(y)=\frac{\mathrm{d}}{\mathrm{d}y}\exp\Big (-\int_y^\infty\sigma_0^G(t;\xi)\,\mathrm{d}t-\frac{1}{N^{2/3}}\int_y^\infty\sigma_1^G(t;\xi)\,\mathrm{d}t+{\rm O}\left(\frac{1}{N}\right) \Big ).
\end{equation}
Expanding the RHS to order $1/N^{2/3}$ allows $p_{0,\infty}^{\xi,G}(y)$, $p_{1,\infty}^{\xi,G}(y)$ to be read off in accordance with \eqref{2.15} with the expressions \eqref{3.11} so being obtained.
\end{proof}

\subsection{Soft edge LUE}

The analogue of \eqref{3.1} for the LUE is the $\tau$-function formula \cite{TW94c,FW01a}
\begin{equation}\label{3.12}
\det(\mathbb{I}-\xi\mathbb{K}_{N,(s,\infty)}^\text{LUE})=\exp \Big (-\int_s^\infty U_N^L(x;\xi)\,\frac{\mathrm{d}x}{x}
\Big ),
\end{equation}
where $U_N^L$ is the solution of the particular $\sigma$-form of the Painlev\'e V equation
\begin{equation}\label{3.13}
(x\sigma'')^2-\left(\sigma-x\sigma'+2(\sigma')^2+(a+2N)\sigma'\right)^2+4(\sigma')^2(\sigma'+N)(\sigma'+a+N)=0,
\end{equation}
subject to the boundary condition
\begin{equation}\label{3.14}
\frac{1}{x}U_N^L(x;\xi)\underset{x\to\infty}{\sim}\xi K_N^L(x,x).
\end{equation}

According to \eqref{2.8}, the appropriate soft edge change of variables in \eqref{3.13} is
\begin{equation*}
x=x_y=4N+2a+2(2N)^{1/3}y.
\end{equation*}
The large $N$ expansion of \eqref{3.12} is obtained by expanding
\begin{equation}\label{3.15}
2(2N)^{1/3}\frac{U_N^L(x_y;\xi)}{x_y}=\sigma_0^L(y)+\frac{1}{N^{2/3}}\sigma_1^L(y)+{\rm O}\left(\frac{1}{N}\right).
\end{equation}
Equating leading powers of $N$ in \eqref{3.13} with this change of variables and expansion gives a coupled set of differential equations for $\sigma_0^L$ and $\sigma_1^L$. For the boundary condition, \eqref{2.5} tells us that $K_N^L(x,x)=\rho_N^L(x)$, with  $\rho_N^L(x)$ denoting the spectral density. From the Laguerre case of \eqref{2.11}
\begin{equation}\label{3.16}
2(2N)^{1/3}\rho_N^L(x_y)=\rho_{0,\infty}^L(y)+\frac{1}{N^{2/3}}\rho_{1,\infty}^L(y)+{\rm O}\left(\frac{1}{N}\right),
\end{equation}
where
\begin{align*}
\rho_{0,\infty}^L(y)&=\rho_{0,\infty}^G(y)=\left(\mathrm{Ai}'(y)\right)^2-y\left(\mathrm{Ai}(y)\right)^2
\\ \rho_{1,\infty}^L(y)&= \frac{2^{1/3}}{10}\left(3y^2\left(\mathrm{Ai}(y)\right)^2-2y\left(\mathrm{Ai}'(y)\right)^2+2\mathrm{Ai}(y)\mathrm{Ai}'(y)\right).
\numberthis\label{3.17}
\end{align*}
With knowledge of this expansion we have all necessary information to deduce the Laguerre analogue of Proposition 
\ref{p3}.

\begin{proposition}\label{p4}
Consider the expansion \eqref{2.15} for the LUE with $a$ fixed independent of $N$. Set $\sigma_0^L(y;\xi)=\sigma_0^G(y;\xi)$ where $\sigma_0^G(y;\xi)$ is specified by \eqref{3.6} and \eqref{3.7}. Define $\sigma_1^L$ as the solution of the second order linear differential equation \eqref{3.8} with $A(y)$, $B(y)$, $C(y)$ as in \eqref{3.9} but with $D(y)$ replaced by
\begin{equation}\label{3.18}
D^L(y)=-2^{4/3}\left(2y\sigma_0\sigma_0'-3y^2(\sigma_0')^2+2\sigma_0(\sigma_0')^2+4y(\sigma_0')^3+\sigma_0'\sigma_0''+y(\sigma_0'')^2\right).
\end{equation}
The solution $\sigma_1^L$ is chosen subject to the boundary
\begin{equation}\label{3.19}
\sigma_1^L(y)\underset{y\to\infty}{\sim} \xi \rho_{1,\infty}^L(y).
\end{equation}
The final formulas \eqref{3.11} of Proposition \ref{p3} remain true for the Laguerre case with all superscripts $G$ replaced by $L$. 
\end{proposition}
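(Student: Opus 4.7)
The plan is to mimic the proof of Proposition~\ref{p3} with the $\tau$-function formula~\eqref{3.12} and the $\sigma$-form Painlev\'e~V equation~\eqref{3.13} in place of their Painlev\'e~IV counterparts. First, I would perform the soft edge change of variables $x = x_y = 4N + 2a + 2(2N)^{1/3}y$ in~\eqref{3.13}, using the chain rule $\partial_x = (2(2N)^{1/3})^{-1} \partial_y$, and substitute the ansatz $2(2N)^{1/3} U_N^L(x_y;\xi)/x_y = \sigma_0^L(y;\xi) + N^{-2/3}\sigma_1^L(y;\xi) + \mathrm{O}(1/N)$ from~\eqref{3.15}. The large-$N$ asymptotics of $x_y$, $1/x_y$, and $x_y \partial_y$ each have to be expanded to two orders before being inserted into~\eqref{3.13}.

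Next, I would collect terms by powers of $N$. The leading order must reproduce the $\sigma$-form Painlev\'e~II equation~\eqref{3.6} for $\sigma_0^L$ (this is the standard soft-edge scaling limit of the Painlev\'e~V and is the source of universality with the GUE); combined with the matching boundary condition obtained below, this forces $\sigma_0^L = \sigma_0^G$. The next-to-leading order is linear in $\sigma_1^L$ by construction, and matching with the left-hand side of~\eqref{3.8} with $A(y),B(y),C(y)$ as in~\eqref{3.9} (these are exactly the derivatives of the leading equation with respect to $\sigma,\sigma',\sigma''$, so they are universal), while the inhomogeneous term $D^L(y)$ in~\eqref{3.18} collects the remaining $\sigma_0$-dependent contributions coming from the cubic term $4(\sigma')^2(\sigma'+N)(\sigma'+a+N)$ and from the $(a+2N)\sigma'$ and $(x\sigma'')^2$ terms, which differ from the GUE case.

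The boundary conditions are obtained by multiplying~\eqref{3.14} by $2(2N)^{1/3}$ and substituting the expansions~\eqref{3.15} on the left and~\eqref{3.16} on the right, equating like powers of $N^{-2/3}$ to recover~\eqref{3.7} (with $G$ replaced by $L$, but $\rho_{0,\infty}^L = \rho_{0,\infty}^G$) and~\eqref{3.19}. Finally, to obtain the Laguerre analogue of~\eqref{3.11} I would change variables $x \mapsto y'$ in the integral in~\eqref{3.12}, using $dx/x_y \sim dy'/(2(2N)^{1/3})$ only after pulling out the prefactor absorbed into~\eqref{3.15}, giving
\begin{equation*}
\det\bigl(\mathbb{I}-\xi\mathbb{K}_{N,(s_y,\infty)}^{\text{LUE}}\bigr) = \exp\!\Bigl(-\!\int_y^\infty\!\!\bigl(\sigma_0^L(t;\xi)+N^{-2/3}\sigma_1^L(t;\xi)\bigr)\,dt + \mathrm{O}(1/N)\Bigr),
\end{equation*}
and then expanding the derivative with respect to $y$ to order $N^{-2/3}$, exactly as in~\eqref{3.11a}, to read off $p_{0,\infty}^{\xi,L}$ and $p_{1,\infty}^{\xi,L}$ in the form~\eqref{3.11}.

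The main obstacle is purely computational: the $\sigma$-form Painlev\'e~V equation~\eqref{3.13} contains a quadratic and a cubic polynomial in $\sigma'$ multiplied by factors carrying explicit $N$'s ($(a+2N)\sigma'$ and $(\sigma'+N)(\sigma'+a+N)$), and the change of variables puts extra $N$-dependence through $x_y$ and $1/x_y$, so bookkeeping the cancellations that reduce the equation from Painlev\'e~V to Painlev\'e~II at leading order, and then identifying exactly the combination giving~\eqref{3.18} at next order, requires careful and lengthy algebra. Everything else is formally parallel to the GUE case.
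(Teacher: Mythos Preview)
Your proposal is correct and follows exactly the approach the paper intends: the paper does not write out a separate proof of Proposition~\ref{p4}, but the discussion surrounding \eqref{3.12}--\eqref{3.17} makes clear that one substitutes the soft edge variable into the $\sigma$-form Painlev\'e~V equation \eqref{3.13}, expands according to \eqref{3.15}, and equates the two leading powers of $N$, with the boundary conditions read off from \eqref{3.14} and \eqref{3.16}---precisely as in the proof of Proposition~\ref{p3}. Your identification of the computational bookkeeping as the only real work is accurate.
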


\subsection{Differential equations for the spectral density}

Substituting \eqref{3.3} in \eqref{3.2} and equating terms of order $\xi^2$ tells us that the spectral density $\rho_N^G$ satisfies the nonlinear differential equation
\begin{equation}\label{3.20}
(\sigma'')^2-4(x\sigma'-\sigma)^2+8N(\sigma')^2=0.
\end{equation}
We note that differentiating with respect to $x$ and cancelling a factor of $2\sigma''$ transforms this to the third order linear differential equation
\begin{equation}\label{3.21}
\sigma'''-4x(x\sigma'-\sigma)+8N\sigma'=0.
\end{equation}
This latter characterisation of $\rho_N^G$ is in fact known from earlier work \cite{GT05,HT12,WF14}.

Analogous reasoning holds for $\rho_N^L$. Thus we substitute \eqref{3.14} in \eqref{3.13} and equate terms of order $\xi^2$ to conclude that the spectral density times $x$, $x\rho_N^L$, satisfies the nonlinear differential equation
\begin{equation}\label{3.22}
(x\mu'')^2-\left(\mu-x\mu'+(a+2N)\mu'\right)^2+4N(a+N)(\mu')^2=0.
\end{equation}
Differentiating with respect to $x$ and cancelling a factor of $2\mu''$ we obtain from this the third order linear differential equation
\begin{equation}\label{3.33}
x^2\mu'''+x\mu''-\left(\mu-x\mu'+(a+2N)\mu'\right)(a+2N-x)+4N(a+N)\mu'=0.
\end{equation}
Writing $\mu=x\sigma$ it follows that $\rho_N^L$ itself satisfies the third order linear differential equation
\begin{equation}\label{3.34}
x^3\sigma'''+4x^2\sigma''-x\left(x^2-2(2N+a)x+a^2-2\right)\sigma'+\left((2N+a)x-a^2\right)\sigma=0.
\end{equation}
Like \eqref{3.21} in relation to $\rho_N^G$, this characterisation of $\rho_N^L$ is in fact known from earlier work \cite{GT05,ATK11,Ra16}.

\subsection{Numerical computations using the coupled differential equations}
In the study \cite{PS03} it was shown how a Painlev\'e transcendent characterisation of $p_{0,\infty}^\xi(y)$ for $\xi = 1$
equivalent to that given in Proposition \ref{p3} could be used to achieve high precision numerical evaluation. In
addition to extending the accuracy of the boundary condition (\ref{3.7}) to the next order, the method made use of
nested power series solutions, with overlapping radii of convergence.

Extending the accuracy of the boundary conditions is straightforward. For example, in the GUE case (\ref{3.7})
is to be extended to read
\begin{equation}\label{3.7a}
\sigma_0^{\rm G}(y;\xi) \mathop{\sim}\limits_{y \to \infty} \xi K(y,y) - 
\xi^2 \int_y^\infty  (K(y,x))^2 \, dx,
\end{equation}
where $K(x,y)$ is given by (\ref{2.12}), and similarly (\ref{3.10}) is to be extended to read
\begin{equation}\label{3.10a}
\sigma_1^{(G)}(y;\xi)\underset{y\to\infty}{\sim}\xi L^{G}(y,y)  - 
\xi^2 \int_y^\infty  (L^{G}(y,x))^2 \, dx.
\end{equation}
These extensions follow from the fact that the extension of the boundary condition (\ref{3.3}) is
\begin{equation}\label{3.3a}
U_N^G(x;\xi)\underset{t\to\infty}{\sim}\xi K_N^G(x,x) - \xi^2
\int_y^\infty (K_N^G(y,x))^2 \, dx,
\end{equation}
which in turn is a corollary of (\ref{3.1}), the expansion (see e.g.~\cite[Eq.~(9.1)]{Fo10})
$$
\det\left(\mathbb{I}-\xi\mathbb{K}_{N,(s,\infty)}^{\text{GUE}}\right) = 1 -
\xi \int_s^\infty \rho_{(1)}(x) \, dx + {\xi^2 \over 2}
 \int_s^\infty  \int_s^\infty  \rho_{(2)}(x_1, x_2) \, dx_1 dx_2 + \cdots
 $$
and the determinantal formula (\ref{2.5}).

With the boundary conditions so extended, we found that using commercial DE solving software gave
agreement, to graphical accuracy at least,
with the operator formulae computed using Bornemann's method for $\xi = 1$ over the
full range of $s$ values considered, but in the cases $\xi < 1$ there was typically a (negative) value of $s$,
occurring at a turning point,
for which the DE solver started tracking the wrong solution;  see Figures  \ref{fig:3} and \ref{fig:4}.

\begin{figure}[H]
	\centering
	\begin{subfigure}[t]{0.5\textwidth}
		\centering
		\includegraphics[height=1.6in]{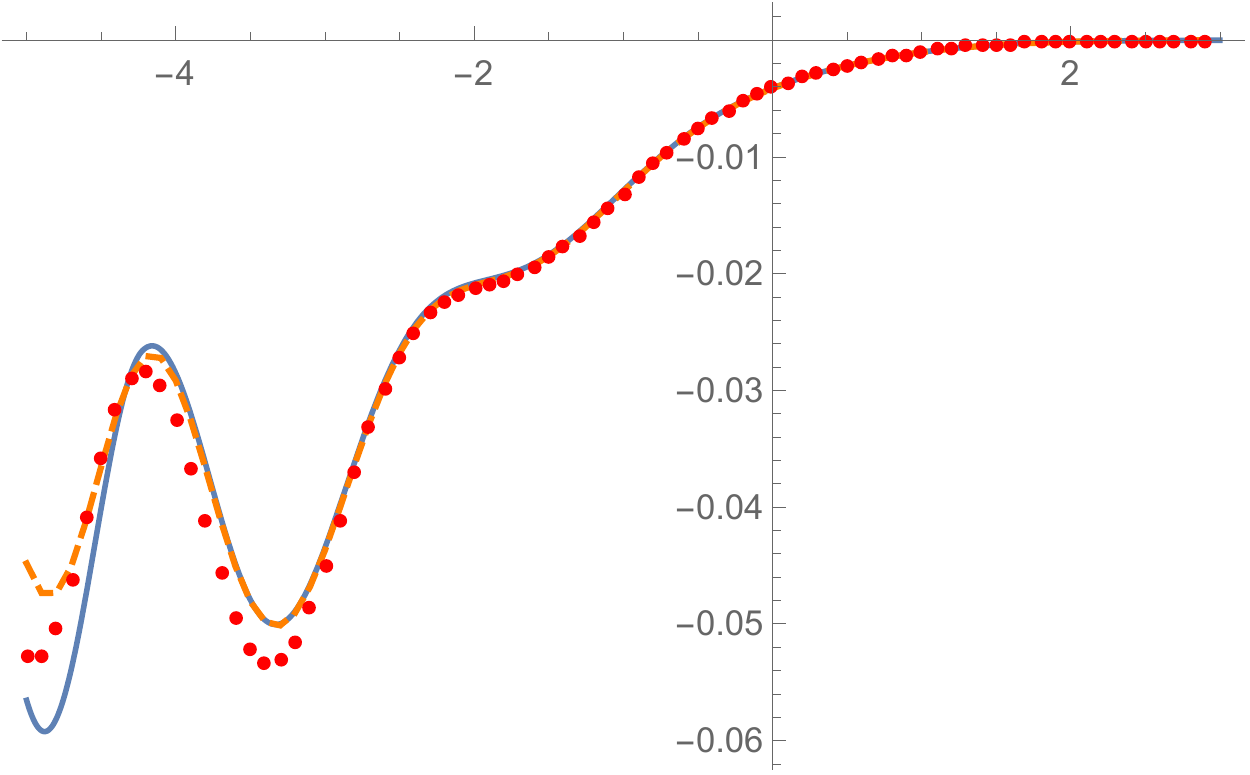}
		\caption{$\xi=0.3$}
	\end{subfigure}%
	~
	\begin{subfigure}[t]{0.5\textwidth}
		\centering
		\includegraphics[height=1.6in]{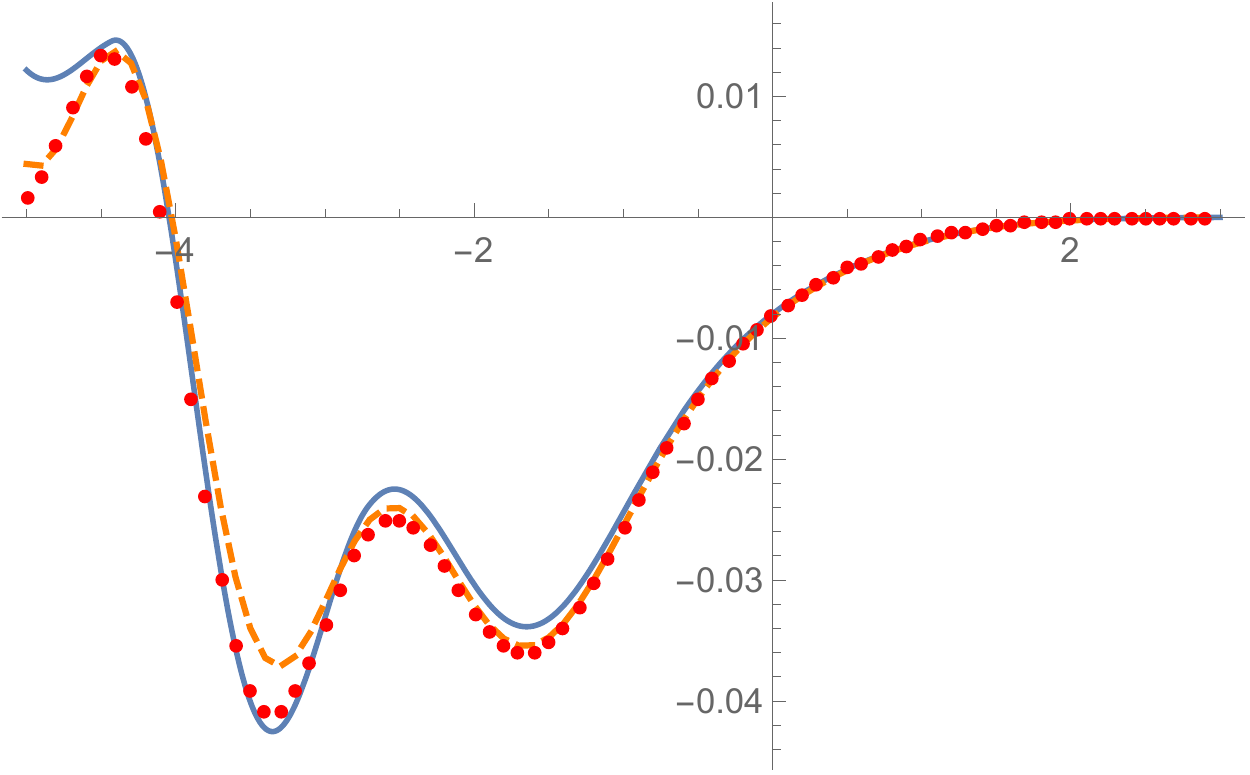}
		\caption{$\xi=0.6$}
	\end{subfigure}
	\caption{GUE. \label{fig:3} Solid line is $p_{1,\infty}^{\xi, G}(s)$ as computed using the coupled DEs, dashed line is the operator formula for $p_{1,\infty}^{\xi, G}(s)$ computed using Bornemann's method, dots are the scaled difference (\ref{6.1}).}
\end{figure}

\begin{figure}[H]
	\centering
	\begin{subfigure}[t]{0.5\textwidth}
		\centering
		\includegraphics[height=1.6in]{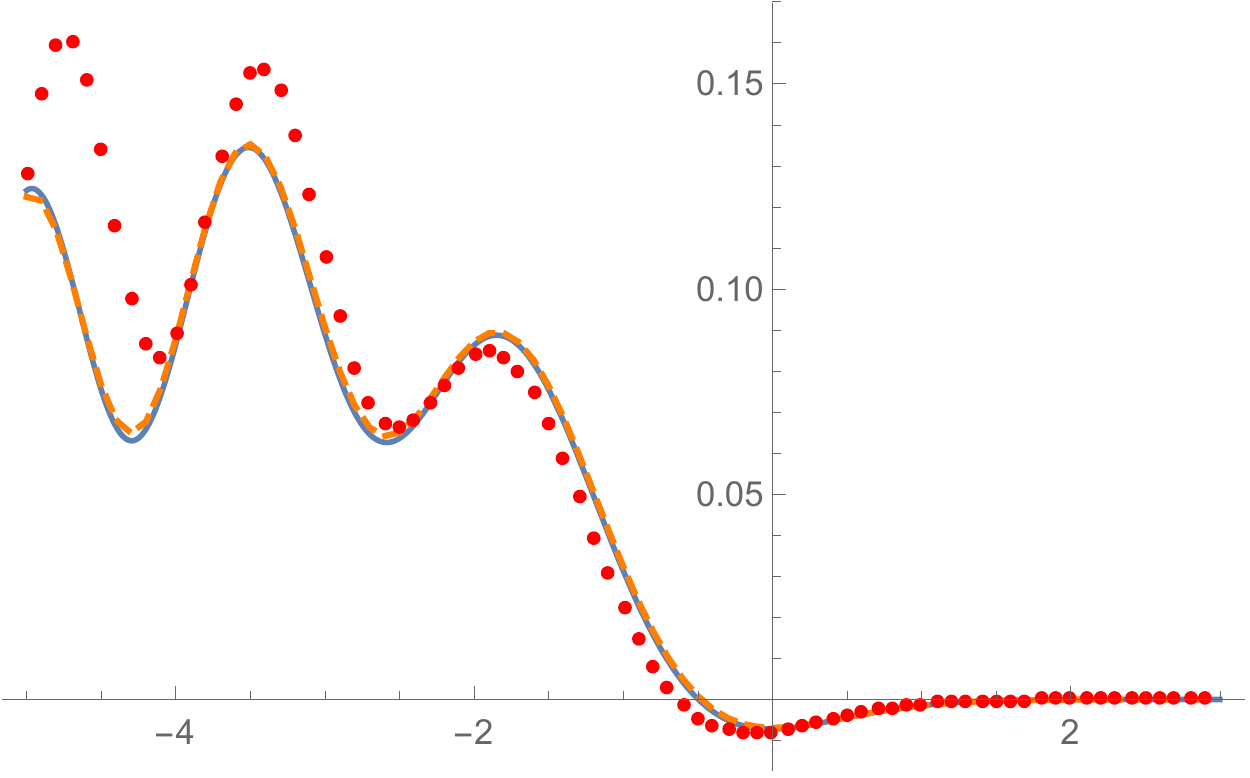}
		\caption{$\xi=0.3$}
	\end{subfigure}%
	~
	\begin{subfigure}[t]{0.5\textwidth}
		\centering
		\includegraphics[height=1.6in]{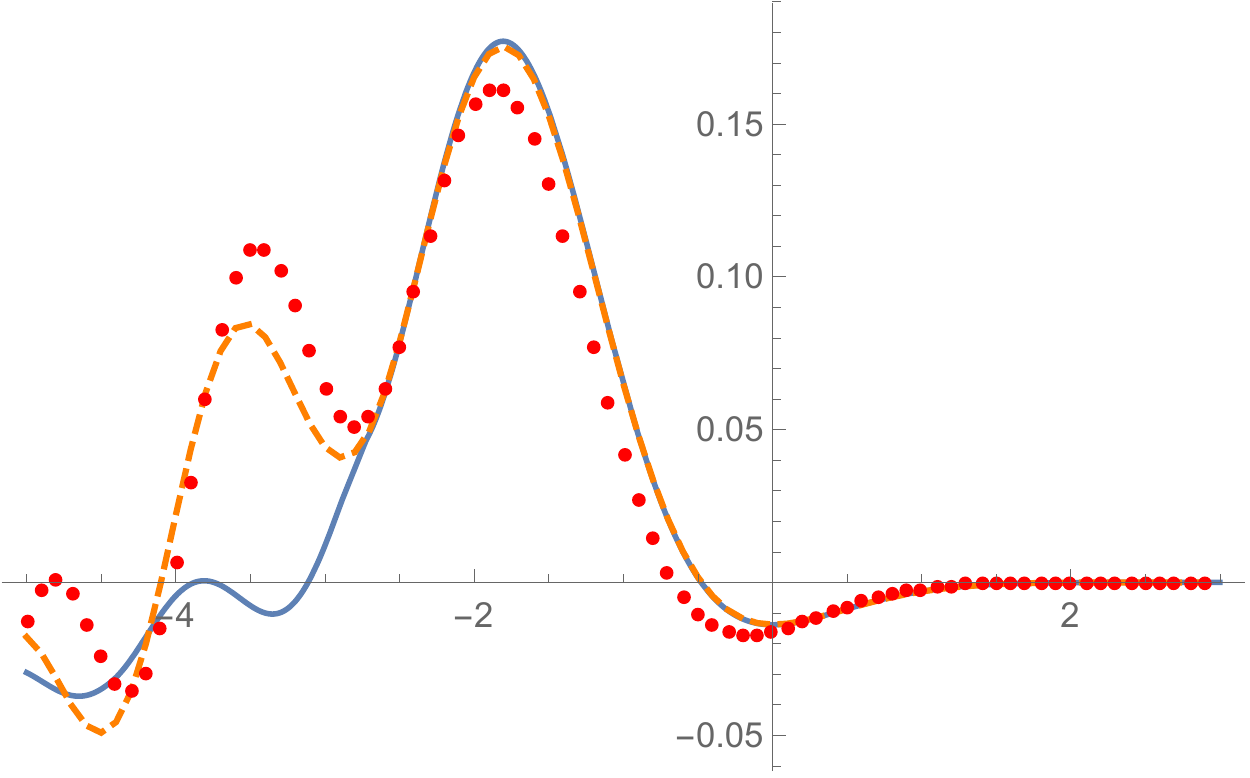}
		\caption{$\xi=0.6$}
	\end{subfigure}
	\label{fig:4}
	\caption{LUE. \label{fig:4} Solid line is $p_{1,\infty}^{\xi, L}(s)$, as computed using the coupled DEs, dashed line is the operator formula for $p_{1,\infty}^{\xi, L}(s)$ computed using Bornemann's method, dots are the scaled difference (\ref{6.1}).}
\end{figure}

\section{LUE with $a$ proportional to $N$}
\subsection{Expansion of the kernel and operator theoretic formulae}
Suppose in the LUE weight (\ref{2.2}) that $a$ is replaced by $\alpha N$, with $\alpha$ fixed.
A well defined soft edge state results by introducing the scaling variable \cite{Jo01}
\begin{equation}\label{3.25}
s_{t,\alpha} = N ( \sqrt{1 + \alpha} + 1)^2 + N^{1/3} (\sqrt{1+\alpha} + 1)
\Big ( {1 \over  \sqrt{1 + \alpha} } + 1 \Big )^{1/3} t.
\end{equation}
However, the analogue of Proposition \ref{p1} for this setting is not in the existing literature.
Our first task then is to derive such a formula.

\begin{proposition}
Let $s_{t,\alpha}$ be given by (\ref{3.25}). We have
\begin{equation}\label{2}
\Big ( {\partial s_{t,\alpha} \over \partial t} \Big ) K_N(s_{x,\al},s_{y,\al})=K(x,y)+\frac{1}{N^{2/3}}L_{\al}(x,y)
+{\rm O}\left(\frac{1}{N}\right),
\end{equation}
where $K(x,y)$ is given by (\ref{2.12}) and
\begin{multline}\label{4}
L_{\al}(x,y)=- \frac{\al^2(1+\al+\sqrt{1+\al})^{1/3}}{32(1+\al)^{5/6}(1+\sqrt{1+\al})^3}\frac{(x^2+y^2)^2\left(\mathrm{Ai}(x)\mathrm{Ai}'(y)-\mathrm{Ai}'(x)\mathrm{Ai}(y)\right)}{x-y}
\\
+\frac{1}{160(1+\al)^{2/3}(1+\sqrt{1+\al})^{2/3}(x-y)}\bigg(-8(2+\al-6\sqrt{1+\al})(x^3-y^3)\mathrm{Ai}(x)\mathrm{Ai}(y)
\\
+\Big(4(6+2\sqrt{1+\al}+3\al)(x-y)-5(\sqrt{1+\al}-1)^2(x^2+y^2)^2\Big)\mathrm{Ai}(y)\mathrm{Ai}'(x)
\\
+\Big(4(6+2\sqrt{1+\al}+3\al)(x-y)+5(\sqrt{1+\al}-1)^2(x^2+y^2)^2\Big)\mathrm{Ai}(x)\mathrm{Ai}'(y)
\\
+8(2+\al-6\sqrt{1+\al})(x^2-y^2)\mathrm{Ai}'(x)\mathrm{Ai}'(y)\bigg).
\end{multline}
\end{proposition}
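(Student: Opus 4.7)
The plan is to mirror Choup's derivation of Proposition~\ref{p1}, with the fixed-$a$ soft-edge Plancherel--Rotach asymptotics of $L_N^a$ replaced by their analogues in the regime $a = \alpha N$. The starting point is the Christoffel--Darboux representation (\ref{2.6}), which after inserting $p_n = (-1)^n n!\,L_n^a$ and $h_n = n!\,\Gamma(a+n+1)$, evaluating at $s_{x,\alpha}, s_{y,\alpha}$, and absorbing the Jacobian $\partial s_{t,\alpha}/\partial t$ reduces the claim to the two-term large-$N$ asymptotic expansion of
$$
\frac{\bigl(w(s_{x,\alpha}) w(s_{y,\alpha})\bigr)^{1/2}\,N!\,(N-1)!}{h_{N-1}\,(s_{x,\alpha} - s_{y,\alpha})}\Bigl[L_N^{\alpha N}(s_{x,\alpha}) L_{N-1}^{\alpha N}(s_{y,\alpha}) - L_{N-1}^{\alpha N}(s_{x,\alpha}) L_N^{\alpha N}(s_{y,\alpha})\Bigr].
$$

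The key analytic ingredient is a uniform two-term soft-edge Plancherel--Rotach asymptotic for $L_N^{\alpha N}(s_{x,\alpha})$ and $L_{N-1}^{\alpha N}(s_{x,\alpha})$. I would obtain this by steepest descent applied to the contour integral
$$
L_n^a(x) = \frac{1}{2\pi i}\oint \frac{e^{-xz/(1-z)}}{(1-z)^{a+1}\, z^{n+1}}\, dz,
$$
keeping the next-to-leading term in the expansion around the coalescing saddle that produces the Airy scaling, or, more systematically, by invoking the Riemann--Hilbert analysis for Laguerre weights with $a$ proportional to $N$ in the style of Kuijlaars--Vanlessen. The outcome has the form
$$
L_N^{\alpha N}(s_{x,\alpha}) = \mathcal{N}(N,\alpha)\Bigl[\mathrm{Ai}(x) + N^{-1/3}\bigl(P(x;\alpha)\mathrm{Ai}(x) + Q(x;\alpha)\mathrm{Ai}'(x)\bigr) + O(N^{-2/3})\Bigr],
$$
with a parallel expansion (differing from the above at order $N^{-1/3}$) for $L_{N-1}^{\alpha N}(s_{x,\alpha})$, and with $P$, $Q$ explicit polynomials in $x$ whose $\alpha$-dependent coefficients are determined by the geometry of the soft edge encoded in (\ref{3.25}).

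Substituting these into the Christoffel--Darboux bracket, together with Stirling's expansion of $h_{N-1}$ and the large-$N$ expansion of the weight exponent $\tfrac{1}{2}(\alpha N \log s_{x,\alpha} - s_{x,\alpha})$ about the scaling point in (\ref{3.25}), every $O(1)$ contribution must collapse to the antisymmetric combination $(\mathrm{Ai}(x)\mathrm{Ai}'(y) - \mathrm{Ai}'(x)\mathrm{Ai}(y))/(x-y)$, recovering the universal Airy kernel $K(x,y)$. The $O(N^{-2/3})$ terms then assemble, after some algebra, into the claimed expression $L_\alpha(x,y)$ in (\ref{4}). The main obstacle is the book-keeping at order $N^{-2/3}$: contributions from the mismatch between $L_N^{\alpha N}$ and $L_{N-1}^{\alpha N}$, from the expansion of the weights at $s_{x,\alpha}$ and $s_{y,\alpha}$, and from Stirling, must cancel precisely to leave the comparatively compact answer in (\ref{4}). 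Two natural consistency checks I would run are symmetry of the result under $x \leftrightarrow y$ (visible term by term in (\ref{4}) once $1/(x-y)$ is divided out) and, in the formal limit $\alpha \to 0$, recovery of the LUE case of (\ref{2.13}) after rescaling to absorb the mismatch between the Jacobians in (\ref{2.8}) and (\ref{3.25}).
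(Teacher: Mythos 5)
Your outline is broadly in the right spirit (steepest descent applied to a contour integral representation of the Laguerre polynomials, with the soft-edge scaling $s_{t,\alpha}$ pulling out the coalescing-saddle Airy structure), but there is a concrete power-counting gap that would stop the plan as written from producing $L_\alpha$.

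The issue is that the leading term in your stated two-term expansions of $L_N^{\alpha N}(s_{x,\alpha})$ and $L_{N-1}^{\alpha N}(s_{x,\alpha})$ is the \emph{same} function $\mathrm{Ai}(x)$ up to normalisation. Consequently the Christoffel--Darboux bracket $p_N(s_x)p_{N-1}(s_y)-p_{N-1}(s_x)p_N(s_y)$ vanishes at $O(1)$, and the Airy kernel $K(x,y)$ already lives at the bracket's relative order $N^{-1/3}$, coming from the \emph{difference} between the two $N^{-1/3}$ coefficients. Once you divide by $s_x-s_y\propto N^{1/3}(x-y)$ and multiply by the Jacobian, the $N^{-2/3}$ correction to the kernel therefore requires the bracket at relative order $N^{-1}$, i.e.\ expansions of the individual polynomials to \emph{four} terms $\mathrm{Ai}(x)+N^{-1/3}(\cdot)+N^{-2/3}(\cdot)+N^{-1}(\cdot)+O(N^{-4/3})$, not the two-term form you propose with an $O(N^{-2/3})$ remainder. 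As stated, your plan only controls the bracket up to $O(N^{-1/3})$ and so can only recover $K(x,y)$; the ``$O(N^{-2/3})$ terms then assemble'' step is not justified because those terms are not within the error budget.

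The paper avoids this altogether by using a different decomposition. Rather than expanding the two polynomials separately and then antisymmetrising, it inserts the contour-integral representation into the Christoffel--Darboux form and collects everything into a single \emph{double} contour integral $J_N(x,y)$ over $z_1,z_2$; the factor $G(z_1,z_2)$ carries $(z_1-z_2)$, so the antisymmetry is built in at the integrand level. After shifting to the coalescing saddle $z_0=-2/(\sqrt{1+\alpha}+1)$ this becomes a factor $(t_1-t_2)$ in $F$, which makes the leading-order cancellation automatic and means a three-term expansion of $J_N$ (equivalently keeping $c_0,c_1,c_2$ in the paper's notation) suffices. In your separate-polynomial route you would additionally have to verify the vanishing of the kernel's $N^{-1/3}$ term by hand --- another delicate cancellation between $a_2-b_2$ contributions and $a_1b_1$ cross terms --- which the paper observes directly (``the second term vanishes'') from the joint expansion. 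If you wish to pursue the individual-polynomial route (e.g.\ via a Riemann--Hilbert steepest-descent argument in the spirit of Kuijlaars--Vanlessen, which does give uniform expansions to any order), you should state the required four-term asymptotics explicitly and show that the coefficients match; otherwise it would be cleaner to mirror the paper and work with the double contour integral for the kernel itself.

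Minor points: your ``mismatch between the Jacobians in (\ref{2.8}) and (\ref{3.25})'' is not actually present at $\alpha=0$; both give $4N+2^{4/3}N^{1/3}t$, and the paper's Remark~(a) states that \eqref{4} reduces to \eqref{2.13} without any rescaling. The symmetry of $L_\alpha(x,y)$ under $x\leftrightarrow y$ is a good check and holds term by term in \eqref{4}.
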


\begin{proof}
We adapt the method used in \cite{GFF05} to deduce Proposition \ref{p1} in the case $x=y$.
The starting point is the scaled variant of (\ref{2.6})
\begin{equation}\label{5}
4K_N(4Nx,4Ny)=\frac{(w_N(x)w_N(y))^{1/2}}{N\lVert \pi_{N-1}\rVert^2}\frac{\pi_N(x)\pi_{N-1}(y)-\pi_{N-1}(x)\pi_N(y)}{x-y},
\end{equation}
where
\begin{equation}\label{6}
w_N(x)=x^{\al N}\exp(-4Nx)
\end{equation}
and $\pi_{N+j-1}$ are scaled monic Laguerre polynomials with the integral representation
\begin{align*}\label{7}
\pi_{N+j-1}(x)&=(-1)^{N+j-1}c_j(N)\oint\frac{\mathrm{d}z}{2\pi i}e^{-2Nzx}\frac{(z+2)^{N(1+\al)}}{z^{N+1}}\left(\frac{1}{z}+\frac{1}{2}\right)^{j-1}
\\ c_j(N)&=\frac{(N+j-1)!}{(2N)^{N+j-1}}.\numberthis
\end{align*}
The contour above is oriented positively and encircles the origin but avoids enclosing the point $z=-2$. 
Substituting in (\ref{5}) gives
the double integral expression
\begin{equation}\label{8}
4K_N(4Nx,4Ny)=\frac{c_0(N)c_1(N)}{N\lVert \pi_{N-1}\rVert^2}\frac{(w_N(x)w_N(y))^{1/2}}{x-y}J_N(x,y),
\end{equation}
where
\begin{equation}\label{9}
J_N(x,y)=\oint\frac{\mathrm{d}z_1}{2\pi i}\oint\frac{\mathrm{d}z_2}{2\pi i}e^{NS(z_1,x)+NS(z_2,y)}G(z_1,z_2)
\end{equation}
and
\begin{align*}
S(z,x)&=-2zx-\log z+(1+\al)\log(1+z/2)
\\ G(z_1,z_2)&=\left(1+\frac{z_1}{2}\right)^{-1}\left(1+\frac{z_2}{2}\right)^{-1}\left(\frac{z_1-z_2}{z_1z_2}\right).
\end{align*}
Here, the function $S(z,x)$ has two saddle points that coalesce to the same point
\begin{equation}\label{10}
z_0=-\frac{2}{\sqrt{1+\al}+1}
\end{equation}
when $x=\frac{1}{4}(\sqrt{1+\al}+1)^2$.

Defining
$$
\widehat{s}_{t,\al} =s_{t,\al}/4N, \qquad 
 b=\left( \frac{1}{\sqrt{1+\alpha}} + 1 \right) \left( \frac{\sqrt{1+\alpha}+1}{2} \right)^3,
 $$
 $$
 S(z)=S\left(z,\frac{1}{4}(\sqrt{1+\al}+1)^2\right), \qquad
 \widehat{J}_N(x,y) =J_N(\widehat{s}_{x,\al},\widehat{s}_{y,a}),
$$
and deforming the contour in \eqref{9} in the same manner as described in \cite{GFF05}
\begin{equation}\label{11}
\widehat{J}_N(x,y)\doteq \int_\mathcal{C}\frac{dz_1}{2\pi i}\exp(NS(z_1)-b^{1/3}N^{1/3}z_1x)\int_\mathcal{C}\frac{dz_2}{2\pi i}\exp(NS(z_2)-b^{1/3}N^{1/3}z_2y) G(z_1,z_2),
\end{equation}
where $\mathcal{C}$ is the contour consisting if two rays of unit length starting at $z = e^{-i\pi/3} + z_0$ then moving to $z = z_0$ and ending at $z = e^{i\pi/3} + z_0$.  The symbol $\doteq$ is used to denote that a remainder
term exponentially small in $N$ has been ignored. Since $S(z)$ is analytic on $\mathcal{C}$, it admits a power series expansion about the point $z=z_0$
\begin{equation}\label{12}
S(z)=S(z_0)+\frac{b}{3}(z-z_0)^3+\frac{(z-z_0)^3}{3}\varphi(z-z_0),
\end{equation}
where
\begin{equation}\label{13}
\varphi(t)=\sum_{k=4}^\infty\frac{S^{(k)}(z_0)}{S^{(3)}(z_0)}\frac{3!}{k!}t^{k-3}.
\end{equation}
Substituting \eqref{13} into \eqref{11} and then taking the change of variables $t=z-z_0$, $\mathcal{C}\to\mathcal{B}$ shows
\begin{multline}\label{14}
\widehat{J}_N(x,y)\doteq e^{-b^{1/3}N^{1/3}z_0(x+y)+2N \Re[S(z_0)]}
\\ \times\int_{\mathcal{B}}\frac{dt_1}{2\pi i}\exp(bN\frac{t_1^3}{3}-b^{1/3}N^{1/3}t_1x)\int_{\mathcal{B}}\frac{dt_2}{2\pi i}\exp(bN\frac{t_2^3}{3}-b^{1/3}N^{1/3}t_2y)
\\ \times G(t_1+z_0,t_2+z_0)e^{bN\frac{t_1^3}{3}\varphi(t_1)+bN\frac{t_2^3}{3}\varphi(t_2)}.
\end{multline}
If we now define the integral operator
\begin{equation*}
\mathbb{E}_{t,x}=\int_{\mathcal{B}}\frac{dt}{2\pi i}\exp(bN\frac{t^3}{3}-b^{1/3}N^{1/3}tx)
\end{equation*}
then for $b>0$ and $0<\beta<1/3$,
\begin{equation*}
\mathbb{E}_{t,x}t^m = (-1)^m b^{-(m+1)/3} N^{-(m+1)/3} [\mathrm{Ai}^{(m)}(x) + {\rm O}(e^{-\beta b N})].
\end{equation*}
This is immediate from the contour integral expression of the Airy function after the change of variables $z=(bN)^{1/3}t$. It allows us to expand the integral component \eqref{14} as
\begin{equation}\label{15}
N\mathbb{E}_{t_1,x}\mathbb{E}_{t_2,y}F(\lambda_1,\lambda_2,t_1,t_2)=c_0(x,y)+\frac{1}{N^{1/3}}c_1(x,y)+\frac{1}{N^{2/3}}c_2(x,y)+{\rm O}\left(\frac{1}{N}\right),
\end{equation}
where
\begin{equation}\label{16}
F(\lambda_1,\lambda_2,t_1,t_2)=G(t_1+z_0,t_2+z_0)e^{\lambda_1\varphi(t_1)+\lambda_2\varphi(t_2)}
\end{equation}
and the functions $N^{-m/3}c_m(x,y)$ can be computed from the formula
\begin{equation}\label{17}
c_m(x,y) = \sum_{k=0}^{m+1}\frac{1}{k!(m+1-k)!}N\mathbb{E}_{t_1,x}\mathbb{E}_{t_2,y}t_1^k t_2^{m+1-k}\frac{\partial^k}{\partial s_1^k}\frac{\partial^{m+1-k}}{\partial s_2^{m+1-k}}F(\lambda_1,\lambda_2,s_1,s_2)\bigg\rvert_{s_1,s_2=0,\lambda_i=bN\frac{t_i^3}{3}}.
\end{equation}
For the unaccounted factors,
\begin{multline}\label{18}
\frac{c_0(N)c_1(N)}{\lVert \pi_{N-1}\rVert^2}(w_N(\widehat{s}_{x,\al})w_N(\widehat{s}_{y,a}))^{1/2}e^{-b^{1/3}N^{1/3}z_0(x+y)+2N \Re[S(z_0)]}
\\ = N\frac{\sqrt{1+a}}{2}\bigg(1-\frac{\al}{4N^{1/3}}\frac{(1/\sqrt{1+\al}+1)^{2/3}}{(\sqrt{1+\al}+1)^2}(x^2+y^2)
\\ +\frac{\al^2}{32N^{2/3}}\frac{(1/\sqrt{1+\al}+1)^{4/3}}{(\sqrt{1+\al}+1)^4}(x^2+y^2)^2+{\rm O}\left(\frac{1}{N}\right)\bigg) 
\\ = N\left(g_0(x,y)+\frac{1}{N^{1/3}}g_1(x,y)+\frac{1}{N^{2/3}}g_2(x,y)+{\rm O}\left(\frac{1}{N}\right)\right).
\end{multline}
Combining the results \eqref{15}, \eqref{16}, \eqref{18} gives
\begin{multline}\label{19}
(bN)^{1/3}K_N(s_{x,\al},s_{y,\al})=\frac{1}{x-y}\bigg(g_0(x,y)c_0(x,y) + [g_0(x,y)c_1(x,y)+g_1(x,y)c_0(x,y)]\frac{1}{N^{1/3}}
\\ +[g_0(x,y)c_2(x,y)+g_1(x,y)c_1(x,y)+g_2(x,y)c_0(x,y)]\frac{1}{N^{2/3}}+{\rm O}\left(\frac{1}{N}\right)\bigg).
\end{multline}
With the assistance of computer algebra, the first term in \eqref{19} is the Airy kernel (\ref{2.12}), the second term vanishes and the third term is \eqref{4}. \end{proof}

\begin{remark} (a) Setting $\alpha = 0$, \eqref{4} reduces to  (\ref{2.13}). (b) Setting $a = \alpha N$ in (\ref{3.34})
and expanding
$$
\Big ( {\partial s_{t,\al} \over \partial t} \Big ) \rho_N(s_{y,\alpha}) =
\rho_{0,\infty}(y;\alpha) + {1 \over N^{2/3}}  \rho_{1,\infty}(y;\alpha) + \cdots
$$
shows that $ \rho_{1,\infty}(y;\alpha)$ satisfies the third order linear DE
\begin{multline}
\left(\frac{1}{\sqrt{1+\alpha}}+1\right)^{2/3}(1+\alpha)(\rho_1'''-4y\rho_1'+2\rho_1) 
\\ = -\sqrt{1+\alpha}(3y\rho_0'''+4\rho_0''-6y^2\rho_0')+(2+\alpha)(y^2\rho_0'-y\rho_0).
\label{4.19}
\end{multline}
But $\rho_{1,\infty}(y;\alpha) = L_\alpha(x,x)$ as specified in \eqref{4}. Indeed, with the aid of computer algebra,
this can be checked to satisfy (\ref{4.19}). A further point of interest is that this procedure
also exhibits that $\rho_0$ satisfies a third order linear differential equation, with the same
homogeneous part as in (\ref{4.19}).

\end{remark}

Corollary \ref{cor2} again applies, now with $\mathbb L_{(t,\infty)}$ denoting the integral operator on $(t,\infty)$ with
kernel (\ref{4}). And the operator theoretic formulae therein can be computed with Bornemann's method as detailed
in \S \ref{NE}. Some examples are given in Figures \ref{fig:5a} and \ref{fig:5b}.

\begin{figure}[H]
	\centering
	\begin{subfigure}[t]{0.5\textwidth}
		\centering
		\includegraphics[height=1.6in]{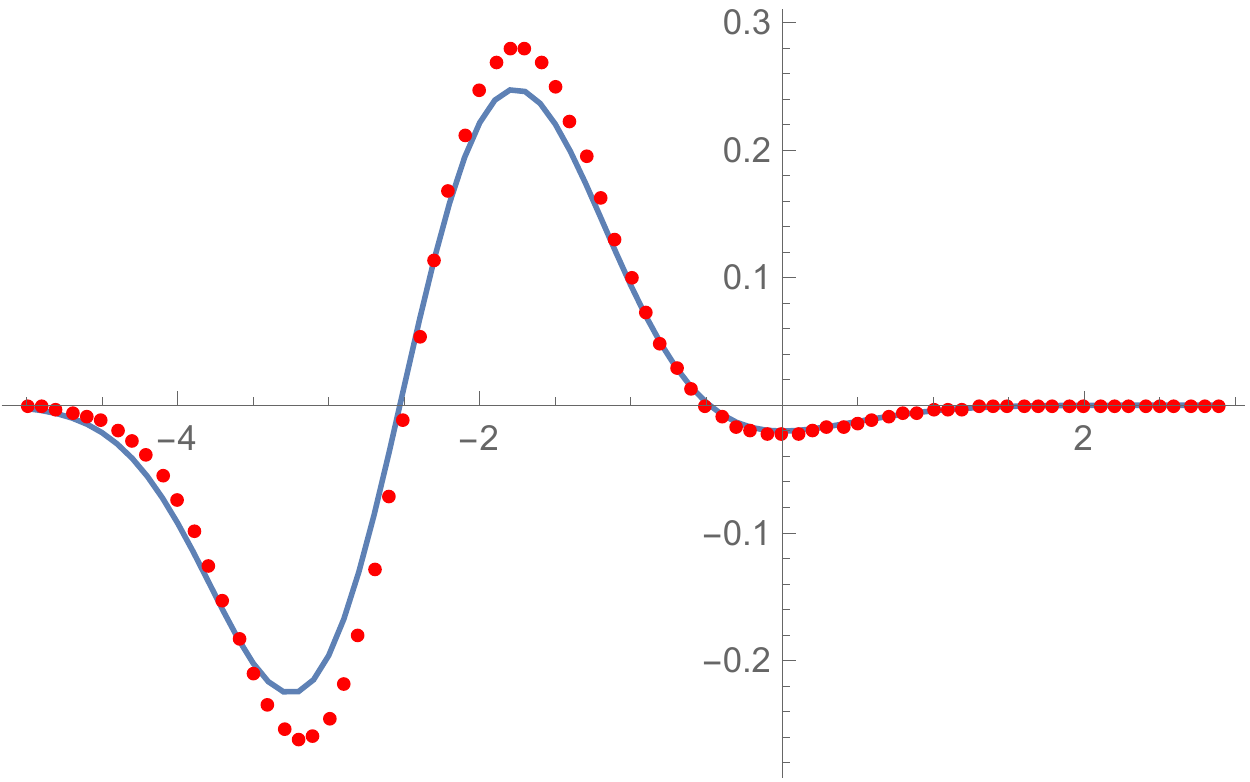}
		\caption{$\alpha=0.5$, $\xi = 1$}
		\label{fig:5a}
	\end{subfigure}%
	~
	\begin{subfigure}[t]{0.5\textwidth}
		\centering
		\includegraphics[height=1.6in]{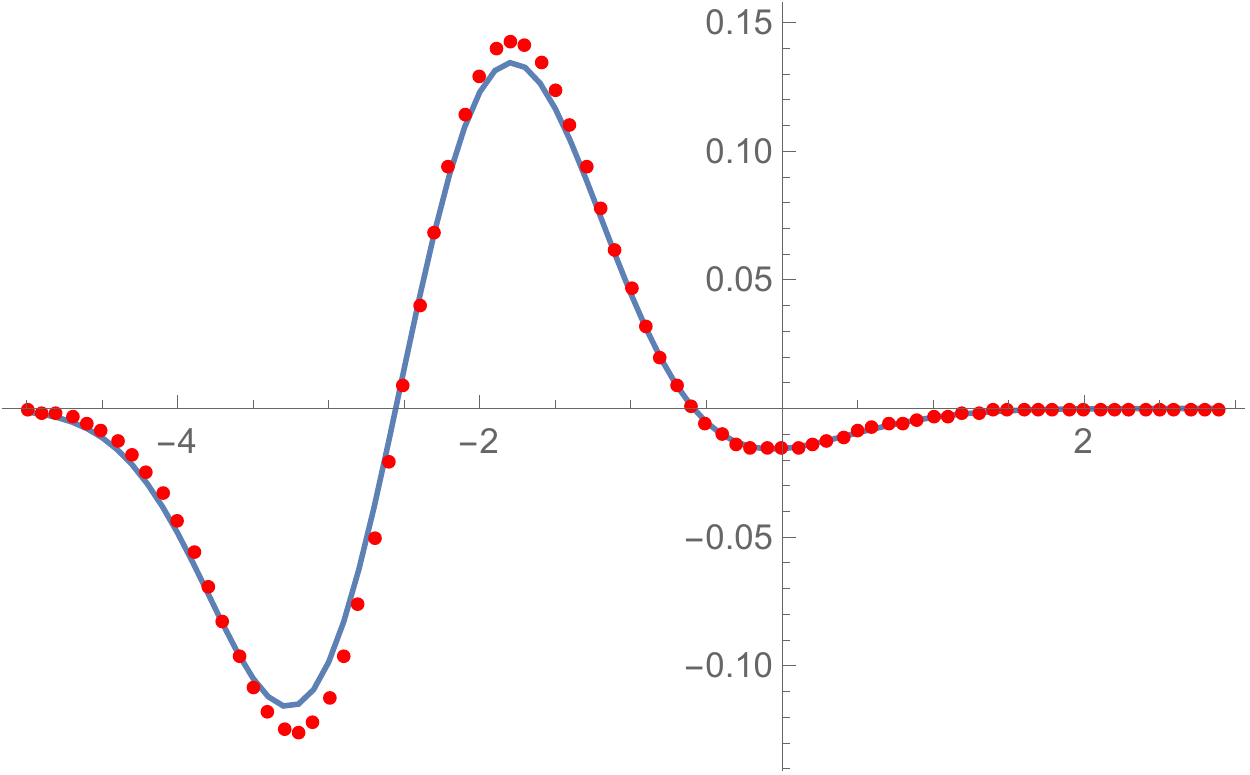}
		\caption{$\alpha=5$, $\xi = 1$}
		\label{fig:5b}
	\end{subfigure}
	\caption{[Colour-on-line] The solid (blue) curve is the correction term derived from Bornemann's method,
	while the dots (red) are the scaled difference (\ref{6.1}).}
	\label{fig:5}
\end{figure}

\subsection{Coupled differential equations}\label{s4.2}

A characterisation in terms of coupled differential equations is also possible. Starting with (\ref{3.12}),
this is obtained by expanding
$$
\Big ( {\partial s_{t,\alpha} \over \partial t} \Big )
{U^L(x_{y,\alpha}; \xi) \over x_{y,\alpha}} =
\sigma_0^{L,\alpha}(y) + {1 \over N^{2/3}} \sigma_1^{L,\alpha}(y) + {\rm O} \Big ( {1 \over N} \Big ),
$$
where $s_{t,\alpha}, x_{y,\alpha}$ are defined as is consistent with (\ref{3.25}).
Noting too that as a corollary of (\ref{2}), 
$$
\Big ( {\partial s_{t,\alpha} \over \partial t} \Big ) \rho_N^L(x_{y,\alpha}) =
\rho_{0,\infty}^L(y) + {1 \over N^{2/3}}  \rho_{1,\infty}^{L,\alpha}(y) + {\rm O} \Big ( {1 \over N} \Big ),
$$
and $ \rho_{1,\infty}^{L,\alpha}(y) = L_\alpha(x,x)$, where $\rho_{0,\infty}^L(y)$ is given by
(\ref{2.12}), we have all the information required to derive the analogue of
Propositions \ref{p3} and \ref{p4}.

\begin{proposition}\label{p6}
Consider the expansion \eqref{2.15} for the LUE with $a = \alpha N$, $\alpha$ fixed independent of $N$. Set $\sigma_0(y;\xi)=\sigma_0^G(y;\xi)$ where $\sigma_0^G(y;\xi)$ is specified by \eqref{3.6} and \eqref{3.7}. Define $\sigma_1^{L,\alpha}$ as the solution of the second order linear differential equation \eqref{3.8} with
\begin{align*}
A(y)&=2(1+\alpha)\sigma_0'' \\
B(y)&=4(1+\alpha)\left(3(\sigma_0')^2 - 2y\sigma_0' + \sigma_0\right) \\
C(y)&=4(1+\alpha)\sigma_0' \\
D(y)&=
\bigg(1+\frac{1}{\sqrt{1+\alpha}}\bigg)^{1/3}\bigg\{  \frac{\alpha^2\sqrt{1+a}}{(1+\sqrt{1+\alpha})^3}(\sigma_0)^2
-2(1+\alpha+\sqrt{1+\alpha})y\sigma_0\sigma_0'
\\ &\mathrel{\phantom{D(y)=}}
+\frac{1}{\alpha}\left(-8+8\sqrt{1+\alpha}-7\alpha+\alpha^2+9\alpha\sqrt{1+\alpha}\right)y^2(\sigma_0')^2
\\ &\mathrel{\phantom{D(y)=}}
-\frac{4(1+\alpha)}{1+\sqrt{1+a}}\left(2\sigma_0(\sigma_0')^2+\sigma_0'\sigma_0''+y(\sigma_0'')^2+4y(\sigma_0')^3\right)\bigg\}.
\end{align*}
The solution $\sigma_1^{L,\alpha}$ is to be chosen subject to the boundary condition
$$
\sigma_1^{L,\alpha}(y) \mathop{\sim}\limits_{y \to \infty} \xi L_\alpha(y,y).
$$
The final formulas of Proposition \ref{p3} remain true with all superscripts $G$ replaced by
$L,\alpha$.
\end{proposition}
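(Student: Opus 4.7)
The plan is to follow line-for-line the strategy used to prove Propositions \ref{p3} and \ref{p4}, with the two modifications that the $\sigma$-form Painlev\'e V equation (\ref{3.13}) now carries $a=\alpha N$ (so $N$ enters both the coefficients and the scaling), and the soft edge centring/scaling is the more elaborate (\ref{3.25}). Concretely, I would begin from the $\tau$-function identity (\ref{3.12}), implement the change of variables $x = x_{y,\alpha}$ dictated by (\ref{3.25}), and insert the postulated expansion
\begin{equation*}
\Big ( \frac{\partial s_{t,\alpha}}{\partial t} \Big )
\frac{U_N^L(x_{y,\alpha};\xi)}{x_{y,\alpha}} =
\sigma_0^{L,\alpha}(y) + \frac{1}{N^{2/3}}\sigma_1^{L,\alpha}(y) + {\rm O}\Big ( \frac{1}{N} \Big )
\end{equation*}
into (\ref{3.13}). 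Since $a+2N = (\alpha+2)N$ and $N(a+N) = (1+\alpha)N^2$, the coefficients of (\ref{3.13}) depend polynomially on $N$. Collecting leading powers of $N$ should give, at the top order, a nonlinear ODE equivalent after rescaling to the $\sigma$-form Painlev\'e II equation (\ref{3.6}), which forces $\sigma_0^{L,\alpha} = \sigma_0^G$; at the next order in $N^{-2/3}$, a second order linear ODE for $\sigma_1^{L,\alpha}$ with $A,B,C$ proportional to those in (\ref{3.9}) (the universal $(1+\alpha)$ prefactor coming from the leading-order scaling of the $\sigma$-PII equation) and an inhomogeneous term $D(y)$ collecting all remaining contributions from (\ref{3.13}), which after simplification should match the stated expression.

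The boundary conditions follow by the same device as in the proof of Proposition \ref{p3}: multiply both sides of (\ref{3.14}) by $\partial s_{t,\alpha}/\partial t$ evaluated at $x = x_{y,\alpha}$, use the diagonal specialisation of the kernel expansion (\ref{2})--(\ref{4}) together with the identity $\rho_{1,\infty}^{L,\alpha}(y) = L_\alpha(y,y)$ noted in the remark preceding the proposition, and equate like powers of $N$. This gives $\sigma_0^{L,\alpha}(y;\xi)\sim \xi \rho_{0,\infty}^L(y)$ and $\sigma_1^{L,\alpha}(y;\xi)\sim \xi L_\alpha(y,y)$ as $y\to\infty$, exactly as required. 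The translation from this $\tau$-function characterisation to the final product-exponential formulas of the type (\ref{3.11}) is then purely mechanical: from (\ref{2.9}), (\ref{3.12}) and the expansion above,
\begin{equation*}
p_N^\xi(y) = \frac{d}{dy}\exp\Big(-\int_y^\infty \sigma_0^G(t;\xi)\,dt - \frac{1}{N^{2/3}}\int_y^\infty \sigma_1^{L,\alpha}(t;\xi)\,dt + {\rm O}(N^{-1})\Big),
\end{equation*}
and expanding the exponential to order $N^{-2/3}$ reads off $p_{0,\infty}^{\xi,L,\alpha}$ and $p_{1,\infty}^{\xi,L,\alpha}$ in the form of (\ref{3.11}) with every superscript $G$ replaced by $L,\alpha$.

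The main obstacle is the algebraic bookkeeping inside (\ref{3.13}): with $a=\alpha N$ every term carries several powers of $N$ mixed from the coefficients, from the expansion of $x_{y,\alpha}$, and from the expansion of $U_N^L/x$, so isolating the order $N^0$ and order $N^{-2/3}$ pieces and massaging them into the exact combination $A(y)\sigma'' + B(y)\sigma' + C(y)\sigma = D(y)$ with the specific $\alpha$-dependent $D(y)$ stated in the proposition is an intricate but entirely mechanical computation, best handled with computer algebra. Two useful sanity checks on the final expression are (i) setting $\alpha = 0$ in $A,B,C,D$ and recovering (after rescaling $y$) the Laguerre coefficients of Proposition \ref{p4}, in particular $D^L(y)$ of (\ref{3.18}); and (ii) verifying that the boundary-condition function $L_\alpha(y,y)$ satisfies the third order linear DE (\ref{4.19}), which the remark preceding the proposition already confirms via computer algebra.
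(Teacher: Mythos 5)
Your proposal matches the paper's own route exactly: the paper derives Proposition~\ref{p6} by substituting the postulated expansion into the $\sigma$-Painlev\'e~V equation \eqref{3.13} with $a=\alpha N$ and the scaling \eqref{3.25}, equating powers of $N$ to get the coupled ODEs, obtaining the boundary conditions from \eqref{3.14} together with the kernel expansion \eqref{2}--\eqref{4} and $\rho_{1,\infty}^{L,\alpha}(y)=L_\alpha(y,y)$, and then reading off the final formulas from the exponentiated $\tau$-function just as in \eqref{3.11a}. The only quibble is a cosmetic one: in your sanity check (i) no rescaling of $y$ is in fact needed, since the overall $(1+\alpha)$ factor in $A,B,C$ simply tends to $1$ as $\alpha\to0$ and $D$ then reduces directly to $D^L$ of \eqref{3.18}.
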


In Figures \ref{fig:6a} and \ref{fig:6b} we plot some graphs obtained from solving the coupled differential
equations of Proposition \ref{p6} in the case $\xi = 1$.

\begin{figure}[H]
	\centering
	\begin{subfigure}[t]{0.5\textwidth}
		\centering
		\includegraphics[height=1.6in]{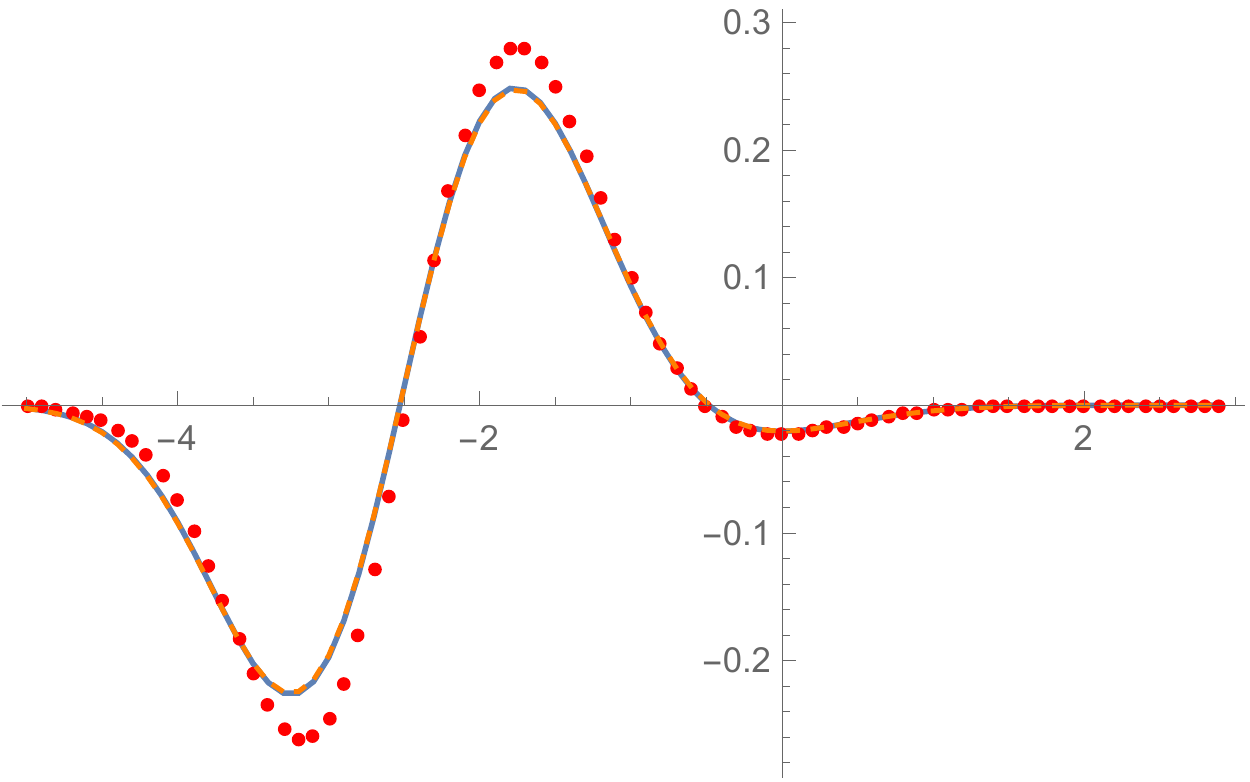}
		\caption{$\alpha=0.5$}
		\label{fig:6a}
	\end{subfigure}%
	~
	\begin{subfigure}[t]{0.5\textwidth}
		\centering
		\includegraphics[height=1.6in]{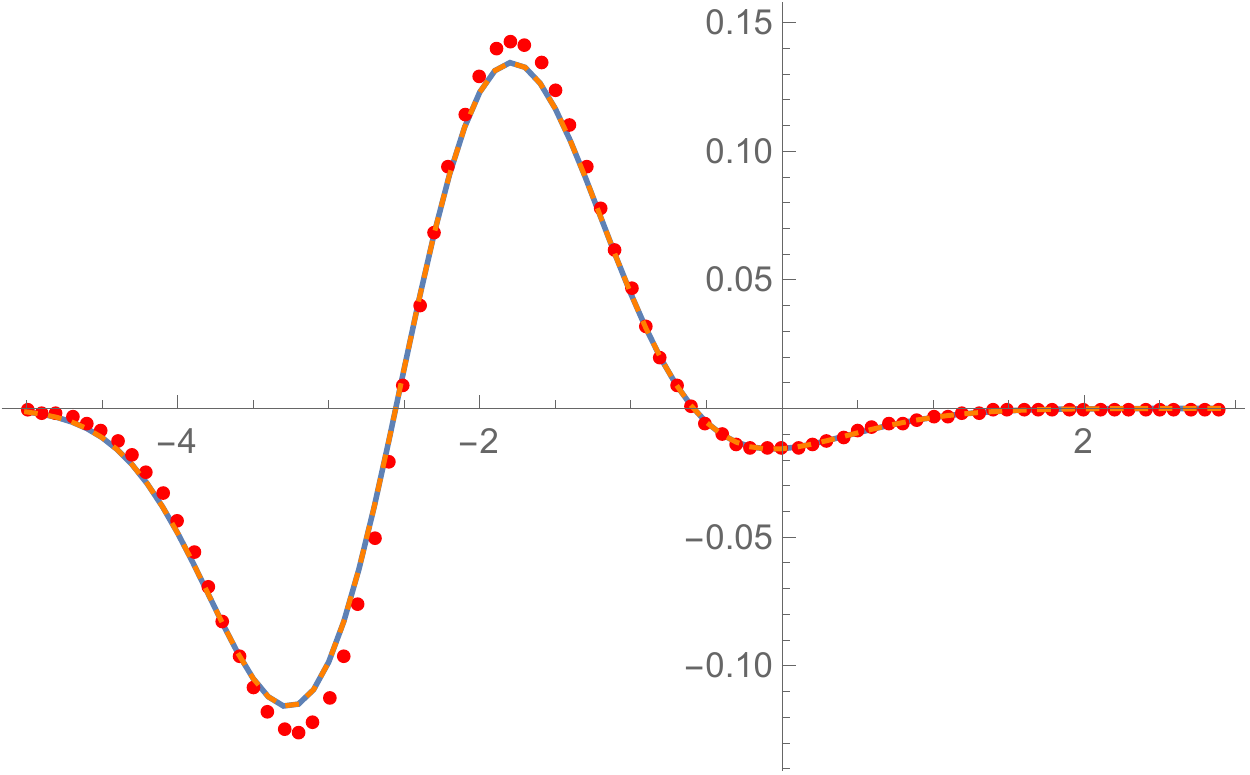}
		\caption{$\alpha=5$}
		\label{fig:6b}
	\end{subfigure}
	\caption{[Colour-on-line] The solid (blue) curve is the correction term derived from the Painlev\'e equations. The dashes (in orange) are the correction term derived from Bonrnemann's method. The dots (in red) are the scaled difference (\ref{6.1}).}
	\label{fig:6}
\end{figure}

\subsection{Consequences for a particular stochastic directed path model}\label{s4.4}
A point of interest in relation to the largest eigenvalue of the LUE is that the associated gap probability
shows itself as an exact formula for a probability in a particular stochastic model on a rectangular grid,
where the observable involves a weighted path. Thus consider an $N \times n$ rectangular grid, with
$n \ge N$ for definiteness. Associate with each lattice site $(i,j)$ an independent exponential random
variable $x_{i,j}$ of unit variance. Define
$$
\ell (N,n) = {\rm max} \sum_{{\rm up/right} \atop
(1,1) \: {\rm to} \: (N,n)} x_{i,j},
$$
where ``up/right $(1,1)$ to $(N,n)$" restricts the indices in the sum to form a path on the
grid which starts at (1,1), and finishes at $(N,n)$, going along the grid in steps which go either one
step up, or one step to the right.

It is a known theorem \cite{Jo99a,BR01}, using the notation defined above (\ref{2.7}), that
\begin{equation}\label{7.1}
{\rm Pr} \, (\ell(N,n) \le s) = E_N^{\rm LUE}(0;(s,\infty)),
\end{equation}
where the LUE has $a=n-N$. In the case that $N \to \infty$ and $a$ fixed,
the results of \S \ref{s2} and \S \ref{s3} as they relate to the LUE, and in particular the LUE
analogue of (\ref{3.11a}), imply the explicit form of the leading correction to the large $N$
form of ${\rm Pr} \, (\ell(N,n) \le s)$, as do the results of \S \ref{s4.2} for $N \to \infty$ with $n - N = \alpha N$ ($\alpha$ fixed).

\begin{corollary}
Let $\ell(N,n)$ be specified as above. For $N \to \infty$ with $n-N$ fixed,
\begin{multline}
{\rm Pr} \Big ( (\ell (N,n) - (4N + 2(n-N))/(2(2N)^{1/3}))\le s  \Big )  \nonumber \\
=\exp \Big ( - \int_s^\infty \sigma_0^{G}(t;1) \, dt \Big )
\Big ( 1 - {1 \over N^{2/3}} \int_s^\infty \sigma_1^L(t;1) \, dt + {\rm O} \Big ( {1 \over N} \Big ) \Big ),
\end{multline}
while for $N \to \infty$ with $n - N = \alpha N$ ($\alpha$ fixed)
\begin{multline}
{\rm Pr} \Big ( (\ell (N,n) - N(\sqrt{1 + \alpha} + 1)^2 ))/( c_\alpha N^{1/3}) \le s  \Big ) \nonumber \\
=
\exp \Big ( - \int_s^\infty \sigma_0^{G}(t;1) \, dt \Big )
\Big ( 1 - {1 \over N^{2/3}} \int_s^\infty \sigma_1^{L,\alpha}(t;1) \, dt + {\rm O} \Big ( {1 \over N} \Big ) \Big ),
\end{multline}
where
$$
c_\alpha = (\sqrt{1 + \alpha} + 1) \Big ( {1 \over \sqrt{1 + \alpha} + 1} + 1 \Big )^{1/3}.
$$
\end{corollary}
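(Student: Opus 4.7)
The plan is to treat both formulas as direct corollaries of the combinatorial identity (\ref{7.1}) together with the $\tau$-function representation (\ref{3.12}) and the large-$N$ expansions already established in Propositions \ref{p4} and \ref{p6}; no new analytic input should be required.

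First, by (\ref{7.1}) with $a = n - N$, the probability in question equals $\det(\mathbb{I} - \mathbb{K}_{N,(s,\infty)}^{\rm LUE})$, and setting $\xi = 1$ in (\ref{3.12}) rewrites this as $\exp\bigl(-\int_s^\infty U_N^L(x;1)\,dx/x\bigr)$. In the fixed-$a$ regime I would then apply the soft edge change of variables from (\ref{2.8}). The key bookkeeping point is that the Jacobian $dx = 2(2N)^{1/3}\,dy$ exactly cancels the normalising prefactor $2(2N)^{1/3}$ built into the expansion (\ref{3.15}), so the exponent simplifies to $-\int_t^\infty [\sigma_0^L(y;1) + N^{-2/3}\sigma_1^L(y;1) + \mathrm{O}(1/N)]\,dy$, where $t = (s - 4N - 2(n-N))/(2(2N)^{1/3})$. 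Expanding the exponential to first order in $N^{-2/3}$ and invoking $\sigma_0^L = \sigma_0^G$ from Proposition \ref{p4} produces the first formula.

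For the $a = \alpha N$ regime the argument is structurally identical. One uses the scaling (\ref{3.25}), which supplies the centering $N(\sqrt{1+\alpha}+1)^2$ and the scale factor $c_\alpha N^{1/3}$. The corresponding expansion of $(\partial s_{t,\alpha}/\partial t)\, U_N^L(x_{y,\alpha};1)/x_{y,\alpha}$ provided in Section \ref{s4.2} has the same shape, with $\sigma_1^L$ replaced by $\sigma_1^{L,\alpha}$ and with Jacobian $c_\alpha N^{1/3}$ again cancelling the normalising factor in the expansion. The same exponential expansion then yields the second formula.

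The only subtle point is the uniformity of the $\mathrm{O}(1/N)$ error when the integral runs over the semi-infinite interval $(t,\infty)$ rather than a bounded one; this is controlled by the exponential decay of the Airy-type kernels at large argument, the very estimate already invoked in the proof of Corollary \ref{cor2}. Consequently I do not anticipate any genuine obstacle in carrying out this program, and the corollary reduces to an assembly of results from Sections \ref{s2}--\ref{s4.2}.
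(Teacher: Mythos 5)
Your proof is correct and matches the argument the paper intends: combine the identity (\ref{7.1}) with the $\tau$-function representation (\ref{3.12}) at $\xi=1$, change variables to the soft-edge scaling so that the Jacobian cancels the normalising prefactor in (\ref{3.15}) (respectively its $\alpha$-analogue from \S\ref{s4.2}), expand the exponential to first order in $N^{-2/3}$, and use $\sigma_0^L=\sigma_0^{L,\alpha}=\sigma_0^G$. The paper states the corollary as an immediate consequence of the LUE analogue of (\ref{3.11a}) (before taking the $t$-derivative); your write-up simply fills in those routine steps, including the point about uniformity of the error on the semi-infinite interval already addressed in Corollary~\ref{cor2}.
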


\section{Numerical study of the distribution of $\lambda_{\rm max}$ for a particular Wigner
matrix}

The exact asymptotic analysis presented above shows that for both the GUE and LUE models, and
upon appropriate centring and scaling, the PDF for $\lambda_{\rm max}$ has the large $N$ form
(\ref{2.15}). In this expansion, the observed universality (i.e.~model independence) of the leading term
$p_{0,\infty}^{\xi = 1}(t)$ is well established in the case of Wigner matrices --- see  see e.g.~\cite{Er11}
--- and sample covariance (Wishart) matrices --- see e.g.~\cite{Pe09}. Wigner matrices are random Hermitian matrices with all entries above the diagonal identically and independently distributed
with mean zero. The GUE is an example of a Wigner matrix with Gaussian entries. Sample covariance matrices are matrices of the form $X^\dagger X$, with the entries of $X$ independent and identically distributed with mean zero.
Such matrices in the case of Gaussian entries give a construction of the LUE; see e.g.~\cite[Ch.~3]{Fo10}.

Our study has shown that the functional form of the leading correction $p_{1,\infty}^\xi(t)$ is model
dependent. On the other hand, this leading correction has the same $N$ dependence in all cases
considered, being proportional to $N^{-2/3}$. One may then ask if this dependence on $N$ persists for
a wider class of Wigner or Wishart matrices? Certainly the strategy of the present
study is no longer feasible, as beyond the Gaussian case there are no examples of Wigner or Wishart matrices for which exact finite $N$ formulas are available for asymptotic analysis. To initiate a study in
this direction, we use instead numerical simulation for one class of Wigner ensemble.

The particular Wigner ensemble to be considered is constructed as $Y = {1 \over 2} (X + X^\dagger)$, where
all entries of $X$ are chosen independently and uniformly from the set of four values
${1 \over \sqrt{2}} (\pm 1 \pm i )$. From this construction the off diagonal entries of $Y$ have mean
zero and variance one half, as for the GUE. It is known (see e.g.~\cite{PS11}) that in this circumstance
the largest eigenvalue is equal to $\sqrt{2N}$, and moreover from \cite{Er11} that with the GUE
scaling as in (\ref{2.8}), the PDF of $\lambda_{\rm max}$ has the universal limiting form 
$p_{0,\infty}^{\xi = 1}(t)$. As our first investigation, for various fixed values of $N$ we formed a histogram
of the distribution of the random variable 
\begin{equation}\label{cc}
t = \sqrt{2} N^{1/6}(\lambda_{\rm max} - \sqrt{2N}), 
\end{equation}
and subtracted from the histogram $p_{0,\infty}^{\xi = 1}(t)$. It was observed that multiplying this difference by
$N^{1/3}$ gave a functional form which to leading order appeared to be independent of $N$; see Figure \ref{fig:7a}.

\begin{figure}[t]
	\centering
	\begin{subfigure}[t]{0.5\textwidth}
		\centering
		\includegraphics[height=2.0in]{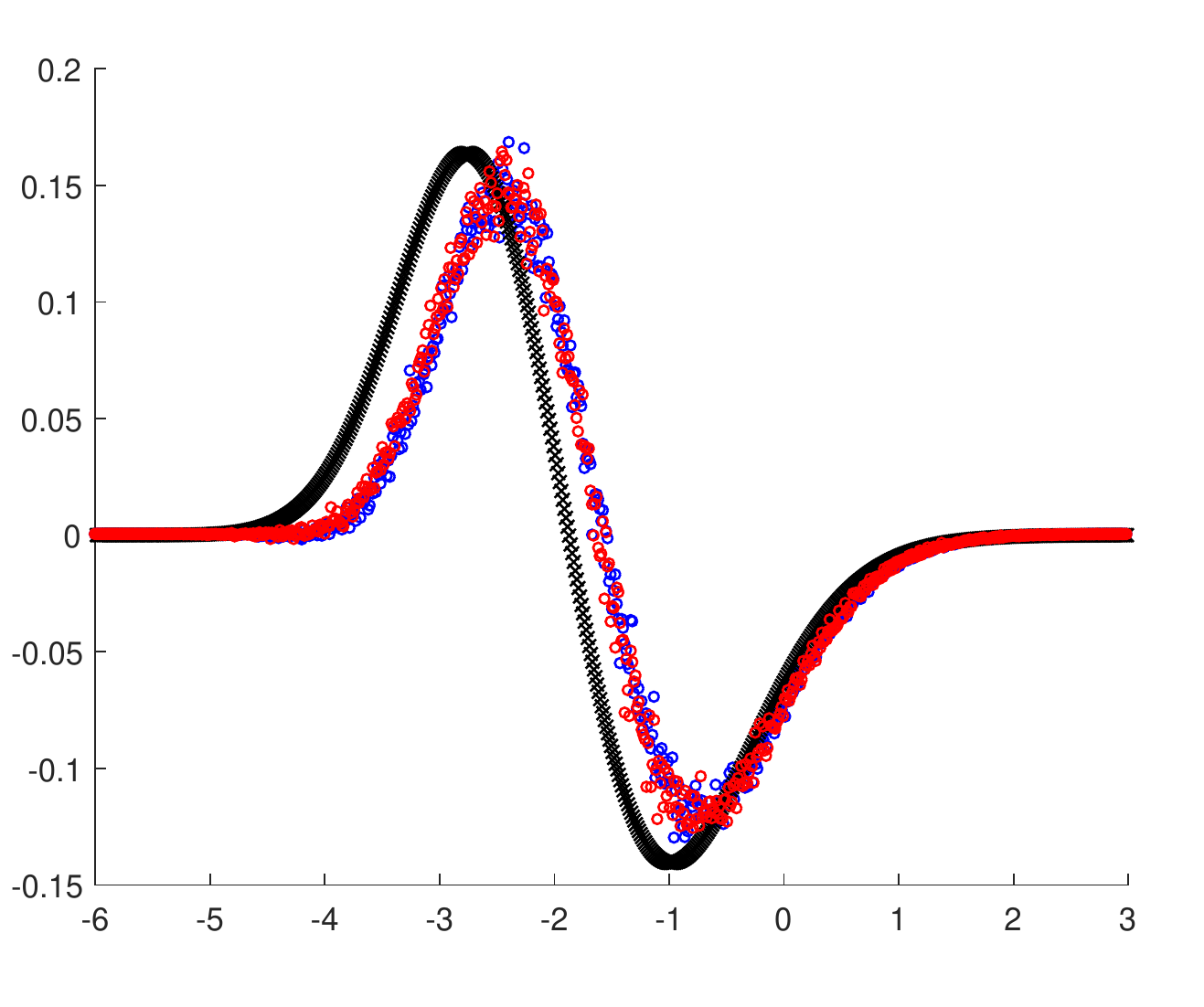}
		\caption{Scaled variable (\ref{cc})}
		\label{fig:7a}
	\end{subfigure}%
	~
	\begin{subfigure}[t]{0.5\textwidth}
		\centering
		\includegraphics[height=2.0in]{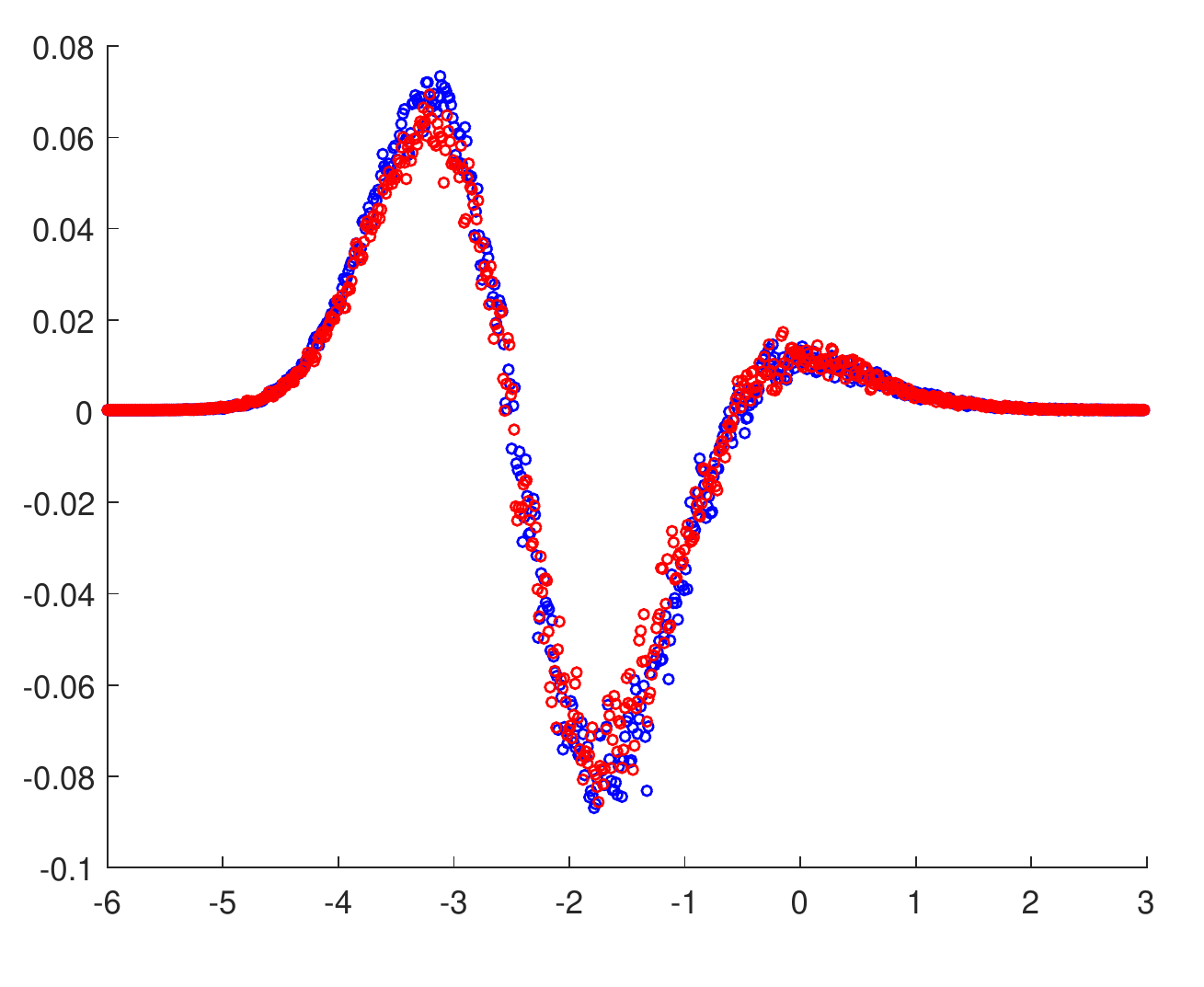}
		\caption{Scaled variable (\ref{cc2})}
		\label{fig:7b}
	\end{subfigure}
	\caption{[Colour-on-line] Blue points $N=50$; red points $N=60$. The left figure is the difference between the histogram for the scaled variable $t$ in (\ref{cc}) and 
	$p_{0,\infty}^{\xi=1}(t)$, multiplied by $N^{1/3}$. Also shown plotted (in black) is one half of 
	${d \over d t}
p_{0,\infty}^{\xi=1}(t)$.The right figure is the difference between the histogram for the scaled variable $t^*$ in (\ref{cc2}) and 
	$p_{0,\infty}^{\xi=1}(t)$, multiplied by $N^{2/3}$. The histograms were each formed from $5 \times 10^6$ samples.}
\end{figure}

So at this stage the particular Wigner ensemble under consideration is exhibiting a large $N$
form for the distribution of the centring and scaling of $\lambda_{\rm max}$ (\ref{cc}) which has
the first correction term  proportional to $N^{-1/3}$ rather than
$N^{-2/3}$ as in (\ref{2.15}). However, we still have the freedom to alter the precise choice of
centring and scaling. To see the possible effect, note that in (\ref{2.15}), with $c$ a constant,
\begin{equation}\label{cc1}
p_N^\xi(t + c/N^{1/3} )=p_{0,\infty}^\xi(t)+\frac{c}{N^{1/3}}{d \over d t}
p_{0,\infty}^\xi(t)+{\rm O}\left(\frac{1}{N^{2/3}}\right).
\end{equation}
Thus one possible mechanism for the leading correction to be proportional to $N^{-1/3}$ is that
the centring and scaling has not been chosen optimally. Furthermore, the signature of this is that
the functional form of the leading correction is then proportional to the derivative of the leading,
universal form.

Given these considerations, we then compared our graphs with ${d \over d t}
p_{0,\infty}^{\xi=1}(t)$ and indeed found a close resemblance, provided we multiplied the latter
by one half; see Figure \ref{fig:7a}. This suggested that (\ref{cc1}) holds with $c=1/2$, and that
introducing the scaled variable
\begin{equation}\label{cc2}
t^* =  \sqrt{2} N^{1/6} (\lambda_{\rm max} - \sqrt{2N} )  + 1/ ( 2 N^{1/3}), 
\end{equation}
does indeed give a correction  which is proportional to 
$N^{-2/3}$. Evidence that this is indeed the case is given in Figure \ref{fig:7b}.

\section*{Acknowledgements}
This research project is part of the program of study supported by the 
ARC Centre of Excellence for Mathematical \& Statistical Frontiers.


\begin{thebibliography}{10}

\bibitem{ATK11}
S.~Adachi, M.~Toda, and H.~Kubotani, \emph{Asymptotic analysis of singular
  values of rectangular complex matrices in the {L}aguerre and fixed trace
  ensembles}, J. Phys. A \textbf{44} (2011), 292002(8pp).

\bibitem{BR01}
J.~Baik and E.M. Rains, \emph{Symmetrized random permutations}, Random matrix
  models and their applications (P.M. Bleher and A.R. Its, eds.), Mathematical
  Sciences Research Institute Publications, vol.~40, Cambridge University
  Press, Cambridge, 2001, pp.~171--208.

\bibitem{BD17}
T.~Bergen and M.~Duits, \emph{Mesoscopic fluctuations for the thinned circular
  unitary ensemble}, Random Matrices: Theory Appl. \textbf{6} (2017), 1750007.

\bibitem{BBLM06}
E.~Bogomolny, O.~Bohigas, P.~Leboeuf, and A.C. Monastra, \emph{On the spacing
  distribution of the {R}iemann zeros: corrections to the asymptotic result},
  J. Phys. A \textbf{39} (2006), 10743--10754.

\bibitem{Bo05}
O.~Bohigas, \emph{Compound nucleus resonances, random matrices, quantum chaos},
  Recent perspectives in random matrix theory and number theory (F.~Mezzadri
  and N.C. Snaith, eds.), London Mathematical Society Lecture Note Series, vol.
  322, Cambridge University Press, Cambridge, 2005, pp.~147--183.

\bibitem{BP04}
O.~Bohigas and M.P. Pato, \emph{Missing levels in correlated spectra}, Phys.
  Lett. B \textbf{595} (2004), 171--176.

\bibitem{Bo09}
F.~Bornemann, \emph{On the numerical evaluation of distributions in random
  matrix theory: a review with an invitation to experimental mathematics},
  Markov Processes Relat. Fields \textbf{16} (2010), 803--866.

\bibitem{Bo08}
\bysame, \emph{On the numerical evaluation of {F}redholm determinants}, Math.
  Comp. \textbf{79} (2010), 871--915.

\bibitem{BFM17}
F.~Bornemann, P.J. Forrester, and A.~Mays, \emph{Finite size effects for
  spacing distributions in random matrix theory: circular ensembles and riemann
  zeros}, Stud. Appl. Math. \textbf{138} (2017), 401--437.

\bibitem{BB17}
T.~Bothner and R.~Buckingham, \emph{Large deformations of the {T}racy-{W}idom
  distribution {I}. {N}on-oscillatory asymptotics}, arXiv:1702.04462.

\bibitem{CC17}
C.~Charlier and T.~Claeys, \emph{Thinning and conditioning of the circular
  unitary ensemble}, Mathematical Physics, Anal. Geometry
  \textbf{https://doi.org/10.1007/s11040-017-9250-4} (2017).

\bibitem{Ch06}
L.N. Choup, \emph{Mesoscopic fluctuations for the thinned circular unitary
  ensemble}, Int. Math. Res. Notices \textbf{2006} (2006), 61049.

\bibitem{Dy62}
F.J. Dyson, \emph{Statistical theory of energy levels of complex systems {I}},
  J. Math. Phys. \textbf{3} (1962), 140--156.

\bibitem{Dy62a}
\bysame, \emph{Statistical theory of energy levels of complex systems {III}},
  J. Math. Phys. \textbf{3} (1962), 166--175.

\bibitem{Dy62c}
\bysame, \emph{The three fold way. {Algebraic} structure of symmetry groups and
  ensembles in quantum mechanics}, J. Math. Phys. \textbf{3} (1962),
  1199--1215.
  
\bibitem{EK06}  
N.~ El Karoui, \emph{A rate of convergence result for the largest eigenvalue of complex white Wishart matrices},  Ann Probab. \textbf{34} (2006), 2077--2117.

\bibitem{Er11}
L.~Erd\"os, \emph{Universality of Wigner random matrices: a survey of recent results},
Russian Math. Surveys \textbf{66} (2011), 507.

\bibitem{Fo93a}
P.J. Forrester, \emph{The spectrum edge of random matrix ensembles}, Nucl.
  Phys. B \textbf{402} (1993), 709--728.

\bibitem{Fo10}
\bysame, \emph{Log-gases and random matrices}, Princeton University Press,
  Princeton, NJ, 2010.

\bibitem{Fo14a}
\bysame, \emph{Asymptotics of spacing distributions 50 years later}, Random
  matrix theory, interacting particle systems and integrable systems (P.~Deift
  and P.~Forrester, eds.), vol.~65, MSRI Publications, Berkeley, 2014,
  pp.~199--222.

\bibitem{FM15}
P.J. Forrester and A.~Mays, \emph{Finite-size corrections in random matrix
  theory and {O}dlykzko's dataset for the {R}iemann zeros}, Proc. Roy. Soc.
  A \textbf{471} (2015), 20150436.

\bibitem{FW00}
P.J. Forrester and N.S. Witte, \emph{Application of the $\tau$-function theory
  of {Painlev\'e} equations to random matrices: {PIV}, {PII} and the {GUE}},
  Commun. Math. Phys. \textbf{219} (2000), 357--398.

\bibitem{FW01a}
\bysame, \emph{Application of the $\tau$-function theory of {Painlev\'e}
  equations to random matrices: {PV}, {PIII}, the {LUE}, {JUE} and {CUE}},
  Commun. Pure Appl. Math. \textbf{55} (2002), 679--727.

\bibitem{GFF05}
T.M. Garoni, P.J. Forrester, and N.E. Frankel, \emph{Asymptotic corrections to
  the eigenvalue density of the {GUE} and {LUE}}, J. Math. Phys. \textbf{46}
  (2005), 103301.

\bibitem{Ga61}
M.~Gaudin, \emph{Sur la loi limite de l'espacement des valeurs propres d'une
  matrice al\'eatoire}, Nucl. Phys. \textbf{25} (1961), 447--458.

\bibitem{GT05}
F.~G\"otze and A.~Tikhomirov, \emph{The rate of convergence for spectra of
  {GUE} and {LUE} matrix ensembles}, Cent. Eur. J. Math. \textbf{3} (2005),
  666--704.

\bibitem{HT12}
U.~Haagerup and S.~Thornbjornsen, \emph{Asymptotic expansions for the
  {G}aussian unitary ensemble}, Infin. Dimens. Anal. Quan- tum Probab. Relat.
  Top. \textbf{15} (2012), 1250003.

\bibitem{JMMS80}
M.~Jimbo, T.~Miwa, Y.~M\^ori, and M.~Sato, \emph{Density matrix of an
  impenetrable {Bose} gas and the fifth {Painlev\'e} transcendent}, Physica
  \textbf{1D} (1980), 80--158.

\bibitem{Jo99a}
K.~Johansson, \emph{Shape fluctuations and random matrices}, Commun. Math.
  Phys. \textbf{209} (2000), 437--476.

\bibitem{Jo01}
I.M. Johnstone, \emph{On the distribution of the largest principal component},
  Ann. Stat. \textbf{29} (2001), 295--327.
  
  \bibitem{Jo08}
\bysame, \emph{Multivariate analysis and Jacobi ensembles: Largest eigenvalue, Tracy-Widom limits and rates of convergence}, Annals of Statistics, \textbf{36}  2008, 2638--2716.  

 \bibitem{JM12}
I.M. Johnstone and Z.~Ma, \emph{Fast approach to the Tracy-Widom law at the edge of GOE and GUE},
Ann. Appl. Probab., \textbf{22} 2012, 1962--1988. 


\bibitem{KS99}
N.M. Katz and P.~Sarnak, \emph{Zeroes of zeta functions and symmetry}, Bull.
  Amer. Math. Soc. \textbf{36} (1999), 1--26.

\bibitem{KS00a}
J.P. Keating and N.C. Snaith, \emph{Random matrix theory and $\zeta(1/2 +
  it)$}, Commun. Math. Phys. \textbf{214} (2001), 57--89.

\bibitem{La16}
G.~Lambert, \emph{Incomplete determinantal processes: from random matrix to
  {P}oisson statistics}, arXiv:1612.00806.
  
\bibitem{Ma12} 
Z.~Ma, \emph{Accuracy of the  Tracy-Widom limits for the extreme eigenvalues in white
Wishart matrices}, Bernoulli \textbf{18} (2012), 322--359.

\bibitem{Me60}
M.L. Mehta, \emph{On the statistical properties of the level-spacings in
  nuclear spectra}, Nucl. Phys. B \textbf{18} (1960), 395--419.

\bibitem{Od89}
A.M. Odlyzko, \emph{The $10^{20}$th zero of the {Riemann} zeta function and 70
  million of its neighbours}, Preprint, 1989.

\bibitem{Od01}
\bysame, \emph{The $10^{22}$-nd zero of the {R}iemann zeta function},
  Dynamical, Spectral, and Arithmeitc Zeta Functions (M.~van Frankenhuysen and
  M.L. Lapidus, eds.), Contemporary Math., vol. 290BB, Amer.~Math.~Soc,
  Providence, RI, 2001, pp.~139--144.
  
\bibitem{PS11}
L.~Pastur and M.~Shcherbina, \emph{Eigenvalue distribution of large random
  matrices}, American Mathematical Society, Providence, RI,, 2011.
  
  \bibitem{Pe09}
S.~P\'ech\'e, \emph{Universality results for the largest eigenvalues of some sample
covariance ensembles}, Prob. Theory Related Fields \textbf{143}
  (2009), 481--516.

\bibitem{PS01}
M.~Pr\"ahofer and H.~Spohn, \emph{Scale invariance of the {PNG} droplet and the
  {A}iry process}, J. Stat. Phys. \textbf{108} (2001), 1071--1106.

\bibitem{PS03}
\bysame, \emph{Exact scaling functions for one-dimensional stationary {KPZ}
  growth}, J. Stat. Phys. \textbf{115} (2004), 255--279.
  
  \bibitem{Ra16}
A.A. Rahman, \emph{Moments of the {L}aguerre $\beta$ ensembles}, MSc.~thesis,
  The University of Melbourne, 2016.

\bibitem{TW94c}
C.A. Tracy and H.~Widom, \emph{Fredholm determinants, differential equations
  and matrix models}, Commun. Math. Phys. \textbf{163} (1994), 33--72.

\bibitem{WF14}
N.S. Witte and P.J. Forrester, \emph{Moments of the {G}aussian $\beta$ ensembles
  and the large {$N$} expansion of the densities}, J. Math. Phys. \textbf{55}
  (2014), 083302.

\end{thebibliography}

\providecommand{\bysame}{\leavevmode\hbox to3em{\hrulefill}\thinspace}
\providecommand{\MR}{\relax\ifhmode\unskip\space\fi MR }
\providecommand{\MRhref}[2]{%
  \href{http://www.ams.org/mathscinet-getitem?mr=#1}{#2}
}
\providecommand{\href}[2]{#2}

\end{document}